\newtheorem{theorem}{Theorem}
\renewcommand{\baselinestretch}{1.2}
\newcommand{\captionfonts}{\normalsize}
\long\def\@makecaption#1#2{%
  \vskip\abovecaptionskip
  \sbox\@tempboxa{{\captionfonts #1: #2}}%
  \ifdim \wd\@tempboxa >\hsize
    {\captionfonts #1: #2\par}
  \else
    \hbox to\hsize{\hfil\box\@tempboxa\hfil}%
  \fi
  \vskip\belowcaptionskip}
\newcommand{\mysps}{\hspace{-0.05em}}
\newcommand{\myspl}{\hspace{-0.15em}}
\DeclareRobustCommand{\rs}[1]{%
  \@tfor\next:=#1\do{\next\mysps}%
}
\DeclareRobustCommand{\rl}[1]{%
  \@tfor\next:=#1\do{\next\myspl}%
}
\newcommand{\refeq}[1]{{Eq~(\ref{#1})}}
\newcommand{\reffig}[2]{{Fig~\ref{#1}{{#2}}}}
\newcommand{\refsfig}[2]{{Fig~\ref{#1}{{#2}}}}
\newcommand{\infoe}{I}
\newcommand{\infoes}[1]{\infoe_{#1}}
\newcommand{\infodq}[2]{I_{#1}^{\rl{#2}}}
\newcommand{\Ad}[2]{A_{#1}^{#2}}
\newcommand{\niexp}{{\rl{N}I}}
\newcommand{\niexps}{{\rl{NI}}}
\newcommand{\dienc}[2]{\Delta \infoe_{#2}}
\newcommand{\diae}[2]{\Delta A_{#2}}
\newcommand{\diaq}[2]{\Delta A_{}^{\rl{#2}}}
\newcommand{\diq}[2]{\Delta \infodq{}{#2}}
\newcommand{\VR}{\mathbf{R}}
\newcommand{\Vr}{\mathbf{r}}
\newcommand{\VRNI}{\mathbf{R}^\niexp}
\newcommand{\RT}{\hat{R}}
\newcommand{\rT}{\hat{r}}
\newcommand{\VRT}{\mathbf{\RT}}
\newcommand{\VrT}{\mathbf{\rT}}
\newcommand{\RB}{\breve{R}}
\newcommand{\VRB}{\mathbf{\RB}}
\newcommand{\rB}{\breve{r}}
\newcommand{\VrB}{\mathbf{\rB}}
\newcommand{\VS}{S}
\newcommand{\VST}{\hat{S}}
\newcommand{\VSTL}{\mathbf{\VST}}
\newcommand{\VSNI}{\VS^\niexp}
\newcommand{\VSNIL}{\mathbf{S}^\niexp}
\newcommand{\Vs}{s}
\newcommand{\VsT}{\hat{s}}
\newcommand\acircled[1]{%
	\tikz[baseline=(X.base)] 
	\node (X) [draw, shape=circle, inner sep=-.7pt] {#1};}
\newcommand\asquared[1]{%
	\tikz[baseline=(X.base)] 
	\node (X) [draw, shape=rectangle, inner sep=0.4pt] {#1};}
\newcommand{\SCirc}{\acircled{\makebox[\widthof{A}][c]{\phantom{A}}}}
\newcommand{\SBox}{\asquared{\makebox[\widthof{A}][c]{\phantom{A}}}}
\newcommand{\SACirc}{\acircled{\makebox[\widthof{A}][c]{A}}}
\newcommand{\SABox}{\asquared{\makebox[\widthof{A}][c]{A}}}
\newcommand{\SBCirc}{\acircled{\makebox[\widthof{A}][c]{B}}}
\newcommand{\SBBox}{\asquared{\makebox[\widthof{A}][c]{B}}}
\newcommand{\PNI}{P^\niexp}
\newcommand{\PNIs}{P^{\niexps\mysps}}
\newlength{\mywidthalign}
\newcommand{\mboxalign}[1]{\makebox[\mywidthalign][l]{#1}}
\begin{document}

\begin{center}
{\LARGE Neural Stochastic Codes, Encoding and Decoding}\\[3ex]
\textbf{\large Hugo Gabriel Eyherabide}\\[3ex]
\end{center}

{
\linespread{1}\selectfont
\begin{center}
\textit{Department of Computer Science and Helsinki Institute for Information Technology, University of Helsinki \linebreak Gustaf H\"allstr\"omin katu 2b, FI00560, Helsinki, Finland}\\[3ex]
\end{center}
}

\begin{center}
{\textbf{Email:} neuralinfo@eyherabidehg.com}\hfill {\textbf{Weppage:} eyherabidehg.com}
\end{center}

\noindent {\bf Keywords:} Neural codes, reduced codes, stochastic codes, neural encoding, neural decoding, observer perspective, spike-time precision, discrimination, noise correlations, information theory, mismatched decoding\\[3ex]

\begin{center} {\bf \large Abstract} \end{center}
Understanding brain function, constructing computational models and engineering neural prosthetics require assessing two problems, namely encoding and decoding, but their relation remains controversial. For decades, the encoding problem has been shown to provide insight into the decoding problem, for example, by upper bounding the decoded information. However, here we show that this need not be the case when studying response aspects beyond noise correlations, and trace back the actual causes of this major departure from traditional views. To that end, we reformulate the encoding and decoding problems from the observer or organism perspective. In addition, we study the role of spike-time precision and response discrimination, among other response aspects, using stochastic transformations of the neural responses, here called stochastic codes. Our results show that stochastic codes may cause different information losses when used to describe neural responses and when employed to train optimal decoders. Therefore, we conclude that response aspects beyond noise correlations may play different roles in encoding and decoding. In practice, our results show for the first time that decoders constructed low-quality descriptions of response aspects may operate optimally on high-quality descriptions and vice versa, thereby potentially yielding experimental and computational savings, as well as new opportunities for simplifying the design of computational brain models and neural prosthetics.

\pagebreak

\section{Introduction}

Assessing what aspects of neural activity are informative ---the encoding problem--- and what role they play in brain computations ---the decoding problem--- are two of the most fundamental questions in neuroscience, but their relation remain controversial. Traditionally, these two problems have been related through data processing theorems which ensure, for example, that the decoded information cannot exceed the encoded information \citep{schneidman2003,cover2006,victor2008,eyherabide2013}. Although seemingly valid when studying specific response aspects such as first-spike latencies and spike counts \citep{furukawa2002,nelken2007,gaudry2008,eyherabide2008,eyherabide2009,eyherabide2010JPP, eyherabide2010FCN}, this relation has recently been shown invalid when studying covariations between response aspects under repeated stimulation, also called noise correlations. Contrary to previously thought, here we show that this relation is also invalid when studying response aspects beyond noise correlations, and establish the actual conditions under which the relation holds.

Specifically, previous studies have shown that, although not always \citep{pillow2008,eyherabide2009}, decoding algorithms that ignore noise correlations may cause negligible information losses even in situations within which noise correlation have been found to increase the encoded information \citep{nirenberg2001,latham2005,oizumi2009,ince2010,oizumi2010,pita2011,meytlis2012,eyherabide2013,eyherabide2016,oizumi2016}. This paradoxical observation has previously been attributed to at least four different causes: inherent limitations of the employed measures, putative fundamental differences between ignoring response aspects and response probabilities, neural activity that would only occur should neurons fire independently but not in the recorded data, or the fact that, unlike latencies or spike counts, noise correlations cannot be ignored using deterministic functions of the recorded neural responses, also called reduced codes \citep{nirenberg2003,schneidman2003,latham2005,averbeck2006}. Although still subject to active research \citep{oizumi2009,ince2010,oizumi2010,eyherabide2013,eyherabide2016}, this paradox and its causes continue to be regarded as a sole hallmark of studying noise correlations.

This belief notwithstanding, here we show that analogous observations arise when studying other response aspects such as spike-time precision or response discrimination. This finding constitutes a huge departure from traditional view, because it arises even when the response aspects are defined through stimulus-independent transformations, seemingly violating the data processing theorems. To that end, we reinterpret the encoding and the decoding problems from the observer perspective, and disentangle them based on their operation, as opposed to the type of measure used to assess them \citep{quiroga2009}. Then, we study the role of spike-time precision and response discrimination in encoding and decoding using stochastic codes, as opposed to reduced codes \citep{victor1996,victor2005,nelken2008,quiroga2009,kayser2010,rusu2014}. We show that, like noise correlations, stochastic codes may yield different information losses when employed within the encoding and the decoding problems, but for none of the causes previously proposed by noise-correlation studies \citep{nirenberg2003,schneidman2003,latham2005,averbeck2006}. Finally, we determine under which conditions the encoding and decoding problems are reliably related, thereby clarifying traditional views. We conclude that response aspects including and beyond noise correlations may play different roles in encoding and decoding, and that decoders designed with noisy data may perform optimally when operating on quality data. In this way, we open up new opportunities for studying neural codes and brain computations that may potentially yield experimental and computational savings and reduce the complexity and cost of neural prosthetics.

\section{Methods}

\subsection{Statistical notation} 

When no risk of ambiguity arises, we here employ the standard abbreviated notation of statistical inference \citep{casella2002}, denoting random variables with letters in upper case, and their values, with the same letters but in lower case. Accordingly, the symbols $ P(x|y) $ and $ P(x) $ always denote the conditional and unconditional probabilities, respectively, of the random variable $ X $ taking the value $ x $ given that the random variable $ Y $ takes the value $ y $. Although this notation is common in neuroscience \citep{nirenberg2003,quiroga2009,latham2013}, we have found that it may potentially lead to confusing the encoding and the decoding problems. In those cases, we employ the more general symbols $ P(X{=}v|Y{=}w) $ and $ P(X{=}v) $ to denote the conditional and unconditional probabilities, respectively, that the random variable $ X $ takes the value $ v $ given that the random variable $ Y $ takes the value $ w $.

\subsection{Encoding} The process of converting stimuli $ \VS $ into neural responses $ \VR $ (e.g., spike-trains, local-field potentials, electroencephalographic or other brain signals, etc.) is called encoding \citep{schneidman2003,panzeri2010}. The information that $ \VR $ contains about $ \VS $ is here quantified as 

\begin{equation}\label{met::eq::infodef}
I(\VS;\VR) = \sum_{\Vs,\Vr}{P(\Vs,\Vr)\log{\frac{P(\Vs|\Vr)}{P(\Vs)}}} \, .
\end{equation}

\noindent More generally, the information $ I(\VS;X) $ that any random variable $ X $, including but not limited to $ \VR $, contains about $ \VS $ can be computed using the above formula with $ \VR $ replaced by $ X $. For compactness, we will denote $ I(\VS;X) $ as $ \infoes{X} $ unless ambiguity arises.

\subsection{Decoding} The process of transforming $ \VR $ into estimated stimuli $ \VST $ (or into perceptions, decisions and actions) is called decoding \citep{schneidman2003,panzeri2010}. Under the Bayesian coding hypothesis \citep{knill2004,bergen2015}, decoding is usually performed using optimal decoders (also called Bayesian or maximum-a-posteriori decoders, and ideal homunculus or observers, among other names; see \cite{eyherabide2016} and references therein). In their most general form, these decoders map $ \VR $ into $ \VST $ according to the following formula

\begin{equation}
\VsT = \arg \max_{\Vs}{P(\VS{=}\Vs|\VRT{=}\Vr)}\, . 
\end{equation}

\noindent The symbol $ \VRT $ denotes the surrogate responses either used to train the optimal decoder, or associated with the approximation $ P(\Vs|\VrT) $ of the real posterior probabilities $ P(\Vs|\Vr) $ used to construct the decoder \citep{merhav1994,nirenberg2003,latham2005,quiroga2009,oizumi2010,eyherabide2013,latham2013,eyherabide2016}. 

The joint probability $ P(\Vs,\VsT) $ of the presented and estimated stimuli, also called confusion matrix \citep{quiroga2009}, can be computed as follows

\begin{equation}\label{met::eq::confusion}
P(\Vs,\VsT){=}\sum_{\Vs,\Vr_{\VsT}}{P(\Vs,\Vr)}\, ,
\end{equation}

\noindent where $ \Vr_{\VsT} $ denotes all $ \Vr $ that are mapped into $ \VsT $. Then, the information that $ \VST $ preserves about $ \VS $ is given by $ \infoes{\VST} $, whereas the decoding accuracy above chance level is here defined as follows

\begin{equation}\label{met::eq::accuracydef}
\Ad{\VR}{\VRT}=\sum_{\Vs}{P(\VS{=}\Vs,\VST{=}\Vs)}-\max_{\Vs}{P(\Vs)}\, ,
\end{equation}

\noindent with the subscript and superscript in $ \Ad{X}{Y} $ always indicating the neural codes used to construct and to operate optimal decoders, in that order. The data processing theorems ensure that both $ \Ad{\VR}{\VR}{\geq}\Ad{\VRT}{\VRT} $ and $ \infoes{\VR}{\geq}\infoes{\VST} $.

\subsection{Neural codes} Any representation of the recorded neural activity into a vector $ \VR{=}[R_1,\ldots,R_N] $ of $ N $ response aspects is usually called a neural code. The aspects may represent, for example, neurons, neural populations, or cortical areas \citep{nirenberg2003,schneidman2003,latham2005,eyherabide2013,eyherabide2016}, but also first-spike latencies, spike counts, and spike-timing variability \citep{furukawa2002,nelken2007,eyherabide2010FCN,eyherabide2016}. To assess what aspects are informative, $ \VR $ is usually transformed into another neural code $ \VRT $, often called reduced code \citep{schneidman2003,eyherabide2010FCN}, through stimulus-independent deterministic functions that typically preserve a limited number of response aspects or a coarser version of the existing ones. As we note here, this assessment can also be conducted using stimulus-independent stochastic functions, here called stochastic codes (see Results). In both cases, the data processing theorem ensures that $ \infoes{\VR}{\geq}\infoes{\VRT} $.

\subsection{Noise correlations} 

Given a neural code $ \VR{=}[R_1,\ldots,R_N] $, the aspects are said noise independent if $ P(\Vr|\Vs){=}\PNIs(\Vr|\Vs) $ for all $ \Vr $ and $ \Vs $, with $\PNIs(\Vr|\Vs){=}\prod_{n=1}^{N}{P(r_n|\Vs)}$. Otherwise, they are said noise correlated. As mentioned in \cite{eyherabide2016}, this definition should not be confused with both those that average over stimuli \citep{gawne1993,pereda2005,womelsdorf2012}, and therefore prone to cancellation effects and to confusing noise correlations with activity correlations \citep{nirenberg2003,schneidman2003,eyherabide2016}; and those that are limited to specific linear or nonlinear types of correlations \citep{pereda2005,cohen2011,latham2013}.

\subsection{Encoding-oriented measures} From an information-theoretical standpoint, the importance of response aspects in neural encoding has previously been quantified in many ways \citep[see ][ and references therein]{eyherabide2010FCN}, out of which we chose the following four measures

\setlength\mywidthalign{\widthof{$ \diae{\VR}{\VRT} $}}
\begin{align}
    \mboxalign{$\dienc{\VR}{\VRT}$} &= \infoes{\VR}-\infoes{\VRT}\label{met::eq::dier}\\
    \mboxalign{$\diae{\VR}{\VRT}$} &= \Ad{\VR}{\VR}-\Ad{\VRT}{\VRT} \label{met::eq::daer}\\
    \mboxalign{$\dienc{\VR}{\VSTL}$} &= \infoes{\VR}-\infoes{\VSTL}\label{met::eq::diels}\\
    \mboxalign{$\dienc{\VR}{\VST}$} &= \infoes{\VR}-\infoes{\VST}\label{met::eq::dies}
\end{align}

\noindent Here, $ \VR $ and $ \VRT $ are two arbitrary neural codes; whereas $ \VSTL$ and $ \VST $ denote a sorted stimulus list and the most-likely stimulus, both according to $ P(\VS{=}\Vs|\VRT{=}\VrT) $. The measure $ \dienc{\VR}{\VSTL} $ is here introduced as an encoding-oriented version of the decoding-oriented measure $ \diq{Q}{LS} $ previously proposed by \cite{ince2010} and defined after the next two sections together with other decoding-oriented measures. Notice that $ \dienc{\VR}{\VSTL}{=}\dienc{\VR}{\VST} $ when the number of stimuli is two.

In this study, stochastic functions always exist that transform $ \VR $ into $ \VRT $ in a stimulus-independent manner, unless otherwise stated. Hence, the transformation from $ \VR $ into $ \VST $ can be interpreted as a sequential process \citep{geisler1989,eyherabide2013} in which $ \VR $ is first transformed into $ \VRT $, then into $ \VSTL $, and finally into $ \VST $. Consequently, $ \dienc{\VR}{\VRT} $, $ \dienc{\VR}{\VSTL} $ and $ \dienc{\VR}{\VST} $ can be interpreted as accumulated information losses after the first, second and third transformations, respectively, whereas $ \diae{\VR}{\VRT} $, as the accuracy loss after the first transformation. The data processing theorems ensure that all the above measures are never negative.

The above four measures are here regarded as encoding-oriented, even though the last three use optimal decoders. Indeed, our classification does not follow previous criteria based on the nature of the measure \citep{schneidman2003,quiroga2009}, which are often obscure and arguably questionable \citep{eyherabide2016}. Specifically, consider $\dienc{\VR}{\VRT} $ and $\dienc{\VR}{\VST} $. The former has always been regarded as an encoding measure even when no stimulus-independent function from $ \VR $ into $ \VRT $ exists \citep{nirenberg2003}, and even though it need not be conclusive about the actual encoding mechanism (see Results). Classifying the latter as an encoding measure may seem questionable because it is based on encoded information, but in the output of a decoder \citep{quiroga2009,eyherabide2016}. However, in the light of previous studies on the role of noise correlations in neural decoding \citep{nirenberg2001,nirenberg2003,latham2005,eyherabide2016}, classifying it as a decoding measure may also seem questionable because the decoder is trained and tested with the same responses.

\subsection{Noise correlations in neural encoding} Notice that \refeq{met::eq::dier} has previously been used to quantify the importance of noise correlations in encoding information \citep{schneidman2003,eyherabide2009}. In those cases, $ \VRT $ has been replaced with surrogate responses generated assuming that neurons are noise independent, here denoted $ \VRNI $. However, we can prove the following theorem:

\begin{theorem}\label{met::theo::corrsto}
$ \VRNI $ may not be related to $ \VR $ through stochastic codes and is never related to $ \VR $ through a reduced code.
\end{theorem}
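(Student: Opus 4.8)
The plan is to argue both halves of the statement by contradiction, exploiting the fact that a stochastic (a fortiori, reduced) code is a stimulus-independent transformation, while the defining property of $\VRNI$ is the factorization $P^{\niexps}(\Vr|\Vs)=\prod_n P(r_n|\Vs)$ of the \emph{conditional} distributions. First I would write down precisely what it means for $\VRNI$ to be obtained from $\VR$ through a stochastic code: there is a transition kernel $Q(\Vr'|\Vr)$, \emph{not depending on $\Vs$}, such that $P^{\niexps}(\Vr'|\Vs)=\sum_{\Vr} Q(\Vr'|\Vr)\,P(\Vr|\Vs)$ for every $\Vs$. The reduced-code case is the special case in which $Q$ is deterministic, i.e. $Q(\Vr'|\Vr)\in\{0,1\}$.

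For the reduced-code half (the ``never'' claim), I would exhibit a simple and generic obstruction. A deterministic stimulus-independent map $f$ with $\VRNI=f(\VR)$ would force $\infoes{\VRNI}\le\infoes{\VR}$ by the data processing theorem quoted in the Neural codes subsection; but it is well known, and is the whole point of noise-correlation studies, that $\infoes{\VRNI}$ can \emph{exceed} $\infoes{\VR}$ (this is exactly the ``synergy'' regime cited in the Introduction), so no such $f$ can exist in general. To make the statement hold for \emph{every} correlated code rather than merely generically, I would instead give a direct structural argument: if $\VRNI=f(\VR)$ then $P^{\niexps}(\Vr'|\Vs)=\sum_{\Vr:f(\Vr)=\Vr'}P(\Vr|\Vs)$, and summing a stimulus-independent partition of the original conditionals cannot in general reproduce the product form $\prod_n P(r_n|\Vs)$ unless the $R_n$ were already noise independent — a small explicit two-neuron, two-stimulus example with $P(\Vr|\Vs)$ chosen so that every stimulus-independent coarsening of the $2\times2$ response table violates factorization seals the claim. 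Since $\VRNI$ has the same single-aspect marginals $P(r_n|\Vs)$ as $\VR$ by construction, the only candidate $f$ is the identity, which works only in the already-independent case; hence for any genuinely noise-correlated $\VR$ no reduced code yields $\VRNI$.

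For the stochastic-code half (the ``may not'' claim) it suffices to produce one $\VR$ for which no stimulus-independent kernel $Q$ satisfies the kernel identity above. Here the key point is that the $|\VS|$ equations $P^{\niexps}(\cdot|\Vs)=Q\cdot P(\cdot|\Vs)$ must all be solved by the \emph{same} $Q$. I would pick a small alphabet, write these as a linear feasibility problem in the entries of $Q$ (linear equalities plus the stochasticity constraints $Q\ge 0$, columns summing to one), and choose the conditional tables $\{P(\Vr|\Vs)\}$ and the induced $\{P^{\niexps}(\Vr|\Vs)\}$ so that the system is infeasible; a convenient sufficient obstruction is to find a stimulus $\Vs^\star$ and a response $\Vr'$ with $P^{\niexps}(\Vr'|\Vs^\star)>0$ while every $\Vr$ in the support of $P(\cdot|\Vs^\star)$ is forced by the equations for the other stimuli to have $Q(\Vr'|\Vr)=0$. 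I expect the main obstacle to be precisely this step: cooking up an explicit low-dimensional example in which \emph{simultaneous} solvability fails, and verifying infeasibility cleanly (e.g. via an explicit Farkas-type certificate or a short case analysis) rather than by appeal to genericity. Once such an example is in hand, it simultaneously witnesses the first half as well, since a reduced code is a stochastic code, so the example can be chosen to do double duty and the proof closes.
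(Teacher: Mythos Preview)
Your proposal has a real gap in the reduced-code half and is unnecessarily laboured in the stochastic-code half.

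For the ``never'' claim, the step that carries the weight is your assertion that, because $\VRNI$ has the same single-aspect marginals $P(r_n|\Vs)$ as $\VR$, ``the only candidate $f$ is the identity''. This is not established. Matching marginals does not by itself rule out a nontrivial bijection $f$ on the response alphabet satisfying $P^{\niexps}(f(\Vr)|\Vs)=P(\Vr|\Vs)$ for all $\Vs$; you would still have to argue that no such relabelling can turn the correlated conditionals into the product conditionals while preserving all marginals. Your earlier remarks (``cannot in general reproduce the product form'', and the promise of a $2\times 2$ example) only give existence of bad cases, not the universal statement. The paper closes exactly this gap with a different idea: when $\VRNI$ and $\VR$ have the same response set, any deterministic $f$ is bijective, hence entropy-preserving, so both $P(\cdot|\Vs)$ and $P^{\niexps}(\cdot|\Vs)$ would be maximisers of the conditional entropy $H(\cdot|\VS)$ subject to the fixed marginals; uniqueness of the maximum-entropy distribution then forces $P=P^{\niexps}$, i.e.\ $\VR$ was already noise independent. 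That uniqueness argument is what your sketch is missing.

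For the ``may not'' claim, your linear-feasibility/Farkas route would work, but it is overkill. The paper's argument is the one you yourself wrote down and then set aside: exhibit a single $\VR$ with $\infoes{\VRNI}>\infoes{\VR}$ (the synergy regime), and invoke the data processing inequality for stochastic codes to conclude that no stimulus-independent kernel $Q$ can carry $\VR$ to $\VRNI$. Since the claim is existential (``may not''), one such example suffices; there is no need to engineer support obstructions or certify infeasibility of the full linear system.
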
 

\begin{proof} The first part is proved in Results. The second part was first proved in \cite{schneidman2003}, but their proof is invalid when $ \VRNI $ contains the same responses as $ \VR $. In that case, we can prove the second part by contradiction, assuming that a mapping exists from $ \VR $ into $ \VRNI $ which is deterministic, and hence bijective. Therefore, both $ \VRNI $ and $ \VR $ maximize the conditional entropy given $ \VS $ over the probability distributions with the same the marginals. Because the probability distribution achieving this maximum is unique \cite{cover2006}, $ \VRNI $ and $ \VR $ must be the same, thereby proving our statement.\end{proof}

Consequently, $ \dienc{\VR}{\VRNI} $, $ \diae{\VR}{\VRNI} $, $ \dienc{\VR}{\VSNIL} $, and $ \dienc{\VR}{\VSNI} $ (with $ \VSNIL $ and $ \VSNI $ analogous to $ \VSTL $ and $ \VST $, respectively, but for optimal decoders operating on $ \VRNI $) can be ensured neither to be nonnegative nor to compare information of the same type \citep{nirenberg2003}, unless a stochastic code exists that maps $ \VR $ into $ \VRNI $.

\subsection{Decoding-oriented measures} From and information-theoretical standpoint, the importance of response aspects in neural decoding has been previously quantified in many ways \citep[see ][ and references therein]{eyherabide2013}, out of which we chose the following five that are more closely related to optimal decoding

\setlength\mywidthalign{\widthof{$ \diq{Q}{DL} $}}
\begin{align}
\mboxalign{$ \diq{Q}{D}$}  &= \sum_{\Vs,\Vr}{P(\Vs,\Vr)\,\ln{\frac{P(\Vs|\Vr)}{P(\VS{=}\Vs|\VRT{=}\Vr)}}}\label{met::eq::diqd}\\
\mboxalign{$ \diq{Q}{DL}$}&= \min_{\theta}{\sum_{\Vs,\Vr}{P(\Vs,\Vr) \ln{\frac{P(\Vs|\Vr)}{P(\VS{=}\Vs|\VRT{=}\Vr,\theta)}}}} \label{met::eq::diqdl}\\ 
\mboxalign{$ \diq{Q}{LS}$}&= \infoes{\VR}-\infoes{\VSTL} \\
\mboxalign{$ \diq{Q}{B} $} &= \infoes{\VR}-\infoes{\VST} 
\end{align}

\noindent Here, $ \VR $ is the neural code on which the decoders operate; $ \VRT $, the neural code with which the decoders are trained; $ \theta $, a real scalar; $ K $, the number of stimuli; $ \VSTL$ and $ \VST $, a sorted stimulus list and the most-likely stimulus, both according to $ P(\VS{=}\Vs|\VRT{=}\Vr) $; and we define $ P(\Vs|\VrT,\theta)$ below, after resolving the flaws of previous definitions \citep{merhav1994,latham2005,oizumi2009,oizumi2010,eyherabide2013,oizumi2016}. Specifically, we can prove the following theorem

\begin{theorem}\label{met::theo::dinidl1}
$ P(\Vs|\VrT,\theta){\propto}P(\Vs)$ for all $ \Vs $ and $ \VrT $ if a $ \VrB $ exists such that $P(\VR{=}\VrB|\VS{=}\Vs){>}P(\VRT{=}\VrB|\Vs){=}0$ for some or all $ \Vs $. 
\end{theorem}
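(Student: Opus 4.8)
The plan is to read the conclusion off the defining formula for $P(\Vs|\VrT,\theta)$ (given below) together with the minimization over $\theta$ built into $\diq{Q}{DL}$. I anticipate that formula to be an exponential tilt, $P(\VS{=}\Vs|\VRT{=}\VrT,\theta)\propto P(\Vs)\,P(\VRT{=}\VrT|\VS{=}\Vs)^{\theta}$ normalized over $\Vs$ with $\theta{\geq}0$; whatever its precise shape, the assertion ``$\propto P(\Vs)$ for all $\VrT$'' can only be met where the $\VrT$-dependence disappears, i.e.\ at the untilted point, and there the value is the prior $P(\Vs)$ --- so that point is $\theta{=}0$, and that is all the argument actually uses. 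The real work is to show that the hypothesized $\VrB$ forces the minimizer to be $\theta{=}0$.

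First I would unpack the hypothesis: fix a stimulus $\VsB$ witnessing it, so that $P(\VR{=}\VrB|\VS{=}\VsB){>}0$ while $P(\VRT{=}\VrB|\VsB){=}0$ (the ``all $\Vs$'' alternative being simply the case in which every stimulus is such a witness). Two consequences follow. (i) $P(\VS{=}\VsB,\VR{=}\VrB){=}P(\VsB)\,P(\VR{=}\VrB|\VsB){>}0$, whence $P(\VR{=}\VrB){>}0$ and therefore $P(\VsB|\VrB){>}0$. (ii) In the tilt, the numerator attached to $\VsB$ at $\VrT{=}\VrB$ is $P(\VsB)\cdot 0^{\theta}$, which vanishes for every $\theta{>}0$; hence $P(\VS{=}\VsB|\VRT{=}\VrB,\theta){=}0$ whenever the normalizer is positive, and is the indeterminate $0/0$ otherwise.

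Next I would substitute into \refeq{met::eq::diqdl} and isolate the summand indexed by $(\VsB,\VrB)$, namely $P(\VsB,\VrB)\,\ln\!\big(P(\VsB|\VrB)/P(\VS{=}\VsB|\VRT{=}\VrB,\theta)\big)$: by (i) its prefactor and the numerator of its logarithm are strictly positive, and by (ii) the denominator of its logarithm is $0$, so this summand is $+\infty$; since $P(\Vs,\Vr){>}0$ always forces $P(\Vs|\Vr){>}0$, no summand is $-\infty$, and hence the whole sum is $+\infty$ for every $\theta{>}0$. Because $\theta{=}0$ lies in the range and there the sum equals the finite quantity $\infoes{\VR}$ (the decoder having degenerated to the prior), the minimization in $\diq{Q}{DL}$ is attained only at $\theta{=}0$; evaluating the defining formula there erases the tilt and leaves $P(\VS{=}\Vs|\VRT{=}\VrT,0)\propto P(\Vs)$ for every $\Vs$ and every $\VrT$, which is the claim (``$\propto$'' being ``$=$'' after normalization over $\Vs$). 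En route one also sees that $\diq{Q}{DL}$ itself collapses to $\infoes{\VR}$---complete information loss---mirroring the divergence of the untilted $\diq{Q}{D}$ in \refeq{met::eq::diqd}, which the $\theta$-minimization was introduced to cure.

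The step I expect to be the main obstacle is the bookkeeping in the ``all $\Vs$'' case, where $\VrB$ falls outside the support of $\VRT$ so that $P(\VS{=}\VsB|\VRT{=}\VrB,\theta)$ is a genuine $0/0$ for $\theta{>}0$: one must make sure the convention fixed in the definition below---precisely the one introduced to repair ``the flaws of previous definitions''---does not quietly overwrite this quantity by the prior at $\theta{>}0$, for then the offending summand would become finite and $\theta{>}0$ would cease to be excluded. The clean way out is that the value at $\theta{=}0$ is never ambiguous ($0^{0}{=}1$ in numerator and normalizer alike), so the definition need only forbid, not regularize, positive $\theta$ whenever such a $\VrB$ exists; the conclusion then holds uniformly and, in particular, is insensitive to whether $\VRT$ and $\VR$ happen to share response values---the subtlety behind Theorem~\ref{met::theo::corrsto}.
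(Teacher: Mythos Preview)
Your proposal is correct and follows the same route as the paper: both argue that the hypothesized $\VrB$ forces the minimizer in \refeq{met::eq::diqdl} to $\theta{=}0$ (the paper's $\beta{=}0$), at which point the tilt collapses to the prior $P(\Vs)$ for every $\VrT$. The paper merely outsources the mechanics to Eqs.~(B13a) and~(B15) of \cite{latham2005} where you work them out explicitly from the exponential-tilt form, and your restriction to $\theta{\geq}0$ is harmless since negative $\theta$ renders $0^{\theta}$ ill-defined at $\VrB$ anyway.
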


\begin{proof} Using the standard conventions that $ 0 \log{0} {=} 0 $ and $ x \log{0} {=}\infty $ for $ x{>}0 $ \citep{cover2006}, Eq (B13a) in \cite{latham2005} is fulfilled for all $ P(\Vs|\VrT,\theta) $ such that $ P(\Vs|\VrT,\theta){>}0 $ when $P(\VR{=}\VrT|\VS{=}Vs){>}P(\VrT|\Vs){=}0$. Our result immediately follows by solving Eq  (B15) in \cite{latham2005} with $ \beta{=}0 $. 
\end{proof}

In addition, we can generalize the flaws of previous definitions of $ P(\Vs|\VrT,\theta) $ found in \cite{eyherabide2013} as follows 

\begin{theorem}\label{met::theo::dinidl2}
$ P(\Vs|\VrT,\theta){<}\infty$ if $P(\VrT|\Vs){=}P(\VR{=}\VrT){=}0 $ regardless of $ \Vs $. Otherwise, $ P(\Vs|\VrT,\theta){=}0$ if $P(\VrT|\Vs){=}P(\VR{=}\VrT|\VS{=}\Vs){=}0 $ for some but not all $ \Vs $.
\end{theorem}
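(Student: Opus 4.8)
The plan is to read $P(\Vs|\VrT,\theta)$ as the exponentially tilted mismatched posterior that the paper defines below: the member of the one-parameter family $P(\VS{=}\Vs|\VRT{=}\VrT,\theta)\propto P(\Vs)\,P(\VRT{=}\VrT|\VS{=}\Vs)^{\theta}$ that underlies $\diq{Q}{DL}$ in \refeq{met::eq::diqdl}, evaluated with the conventions $0\log 0{=}0$, $x\log 0{=}\infty$ for $x{>}0$ and $0^{0}{=}1$, exactly as in the proof of Theorem~\ref{met::theo::dinidl1} and as in Eqs~(B13a)--(B15) of \cite{latham2005}. With this in place the argument should be a short case analysis on which probabilities vanish, mirroring the flaw analysis of \cite{eyherabide2013}.

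For the first statement I would assume $P(\VRT{=}\VrT|\VS{=}\Vs){=}0$ for every $\Vs$ and $P(\VR{=}\VrT){=}0$, and observe two things. First, $P(\VR{=}\VrT){=}0$ forces $P(\VS{=}\Vs,\VR{=}\VrT){=}0$ for all $\Vs$, so the contribution of $\VrT$ to the sum in \refeq{met::eq::diqdl} is a product $0\cdot(\pm\infty){=}0$ regardless of the value assigned to $P(\VS{=}\Vs|\VRT{=}\VrT,\theta)$; hence that value is entirely unconstrained by the minimisation defining $\diq{Q}{DL}$. Second, the tilting formula is literally $0/0$ here, since numerator $P(\Vs)\,0^{\theta}$ and normaliser $\sum_{\Vs'}P(\Vs')\,0^{\theta}$ both vanish, so it must be fixed by convention; the statement then amounts to checking that \emph{any} finite resolution is admissible --- for instance the $\theta{=}0$ value $P(\Vs)$ --- which is exactly the flaw being generalised, in that earlier definitions left this underdetermined and could return $\infty$.

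For the second statement I would fix a stimulus $\Vs$ in the nonempty proper set $Z{=}\{\Vs:P(\VRT{=}\VrT|\VS{=}\Vs){=}P(\VR{=}\VrT|\VS{=}\Vs){=}0\}$ and split on the surrogate normaliser $N_{\theta}{=}\sum_{\Vs'}P(\Vs')\,P(\VRT{=}\VrT|\VS{=}\Vs')^{\theta}$. If $N_{\theta}{>}0$ --- which holds as soon as some $\Vs'\notin Z$ has positive surrogate likelihood at $\VrT$ --- then the numerator $P(\Vs)\,P(\VRT{=}\VrT|\VS{=}\Vs)^{\theta}{=}0$ while the denominator is positive, so $P(\Vs|\VrT,\theta){=}0$. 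If $N_{\theta}{=}0$, then $P(\VRT{=}\VrT|\VS{=}\Vs'){=}0$ for \emph{all} $\Vs'$, so $\VrT$ has zero surrogate marginal; but $Z\neq\mathcal S$ then forces some $\Vs''$ with $P(\VR{=}\VrT|\VS{=}\Vs''){>}0$, hence $P(\VR{=}\VrT){>}0$ and the true posterior $P(\Vs|\VrT){=}0$ on $Z$, so the repair of this conflict in the forthcoming definition --- falling back to the true likelihood, equivalently the true posterior, on observed responses --- again delivers $P(\Vs|\VrT,\theta){=}0$.

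The step I expect to be the main obstacle is the boundary parameter $\theta{=}0$ (and, more broadly, $\theta$ outside a safe range): the bare tilt collapses to the prior $P(\Vs)$ there, which is \emph{not} zero, so the second statement forces the definition to carry the true-support constraint $P(\VR{=}\VrT|\VS{=}\Vs){>}0$ even at $\theta{=}0$, and I would have to check this is consistent both with Theorem~\ref{met::theo::dinidl1} (where $P(\Vs|\VrT,\theta)\propto P(\Vs)$ is read on the support) and with the definition the paper is about to state. The other delicate point is the $N_{\theta}{=}0$ subcase above, where the surrogate model declares an actually observed response impossible: getting that case to equal exactly $0$ rather than remain indeterminate is the real content, and it must be inherited from whatever convention the paper adopts for such zero-probability conflicts.
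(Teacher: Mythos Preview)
Your plan for the first statement is essentially what the paper does: since $P(\VR{=}\VrT){=}0$ kills every term at $\VrT$ in \refeq{met::eq::diqdl} via $0\log 0{=}0$, the optimality condition places no constraint on $P(\Vs|\VrT,\theta)$, so any finite value is admissible.

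For the second statement, however, you work at the wrong level. You attack the posterior tilt $P(\Vs)\,P(\VrT|\Vs)^{\theta}/N_{\theta}$ directly and are then forced to case-split on whether $N_{\theta}$ vanishes and on the boundary $\theta{=}0$; you correctly flag both as obstacles and end up invoking ``the forthcoming definition'', which is circular since that definition is precisely what the theorem is meant to justify. The paper instead argues at the level of the \emph{joint} $P(\Vs,\VrT|\theta)$: Eq~(B13a) in \cite{latham2005}, under the convention $0\log 0{=}0$, is satisfied only if $P(\Vs,\VrT|\theta){=}0$ whenever both $P(\Vs,\VrT){=}0$ and $P(\VS{=}\Vs,\VR{=}\VrT){=}0$, and the posterior claim then drops out of Bayes' rule. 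This sidesteps your $\theta{=}0$ difficulty entirely, because the conclusion concerns the joint and never passes through the tilt formula. It also dispatches your $N_{\theta}{=}0$ subcase and the first statement in one stroke: when the joint vanishes for \emph{all} $\Vs$, Bayes' rule leaves the posterior as $0/0$, hence unconstrained but finite; when it vanishes for some but not all $\Vs$, the posterior is $0$ over something positive, hence zero.
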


\begin{proof} Using the standard convention that $ 0 \log{0} {=} 0 $, Eq (B13a) in \cite{latham2005} is fulfilled only if $ P(\Vs,\VrT|\theta){=}0 $ when $P(\Vs,\VrT){=}0$ and $P(\Vs,\Vr){=}0 $ for $ \Vr{=}\VrT $. Our result immediately follows using Bayes' rule. 
\end{proof}

Consequently, here we define $ P(\Vs|\VrT,\theta) $ as follows

\begin{equation}
P(\Vs|\VrT,\theta)\propto \left\{\begin{array}{ll}
P(\Vs) & \mbox{if $\exists \Vs,\VrB $ such that $P(\VR{=}\VrB|\VS{=}\Vs){>}P(\VRT{=}\VrB|\VR{=}\Vs){=}0$}\\
0 & \mbox{if $P(\VrT|\Vs){=}P(\VR{=}\VrT|\VS{=}\Vs){=}0 $ for some but not all $ \Vs $}\\
P(\Vs)\,P(\VrT|\Vs)^{\theta} & \mbox{otherwise}
\end{array}\right.
\end{equation}

In addition, we will compute the accuracy loss from the decoding perspective as follows

\begin{equation}
\diaq{Q}{B}= \Ad{\VR}{\VR}-\Ad{\VR}{\VRT}\, .
\end{equation}

\noindent This measure differs from $ \diae{\VR}{\VRT} $ in that the last term uses optimal decoders that operate on $ \VR $, as opposed to $ \VRT $. Notice that the decoding-oriented measures defined above have previously been used to quantify the importance of noise correlations in neural decoding, for example, by replacing $ \VRT $ with $ \VRNI $ \citep{nirenberg2001,nirenberg2003,latham2005,eyherabide2009,oizumi2009,quiroga2009,ince2010,oizumi2010,eyherabide2013,latham2013,eyherabide2016}.

\subsection{Shortcomings of decoding-oriented measures} 

The measures $ \diq{Q}{D} $, $ \diq{Q}{DL} $, $ \diq{Q}{LS} $ and $ \diq{Q}{B} $ have all been previously regarded as quantifying the information loss caused when optimal decoders operate on $ \VR $, but make decisions assuming that the input is $ \VRT $. The first measure, $ \diq{Q}{D} $, was introduced by \cite{nirenberg2001} with $ P(\VS{=}\Vs|\VRT{=}\Vr){=}\PNIs(\Vs|\Vr) $, and later extended to any $ P(\VS{=}\Vs|\VRT{=}\Vr) $ \citep{nirenberg2003,quiroga2009,latham2013,eyherabide2016}. Although often overlooked \citep{quiroga2009,latham2013}, $ \diq{Q}{D} $ presents the following two difficulties: first, it can exceed $ \infoes{\VR} $ and tend to infinity if $P(\Vs|\VrT){=}0$ when $P(\Vs|\Vr){>}0 $ for some $ \Vs $ and $ \Vr{=}\VrT $ \citep{schneidman2003}; and second, here we note that it may be undefined if $P(\VrT){=}0$ when $P(\Vr){>}0 $ for some $ \Vr{=}\VrT $ (\refsfig{generalizereducecodes}{B}).

The second measure, $ \diq{Q}{DL} $, was introduced by \cite{latham2005} and has always been deemed exact \citep{latham2005,oizumi2009,quiroga2009,ince2010,oizumi2010,latham2013,oizumi2016}. \cite{latham2005} showed that, unlike $ \diq{Q}{D} $, $ \diq{Q}{DL}{\leq}\infoes{\VR}$, but their proof ignored the cases mentioned in our theorems~\ref{met::theo::dinidl1}~and~\ref{met::theo::dinidl2}. Nevertheless, these theorems imply that $ \diq{Q}{DL}{\leq}\infoes{\VR}$ through \refeq{met::eq::diqdl}. However, \cite{eyherabide2013} has shown that $ \diq{Q}{DL} $ may exceed both $ \diq{Q}{LS} $ and $ \diq{Q}{B} $; and \cite{eyherabide2016}, that it may overestimate the loss when decoding together neural populations that transmit independent information.

The third measure, $ \diq{Q}{LS} $, was introduced by \cite{ince2010} and quantifies the difference between two encoded informations: the one in $ \VR $, and the one in the output of decoders that, after observing $ \Vr $, produce a stimulus list sorted according to $ P(\VS{=}\Vs|\VRT{=}\Vr) $. The fourth measure, $ \diq{Q}{B} $, quantifies the difference between the information encoded in $ \VR $ and that encoded in the output of an optimal decoder constructed using $ P(\Vs|\VrT) $. When the number of stimuli is two, $ \diq{Q}{LS} $ reduces to $ \diq{Q}{B} $ \citep{eyherabide2013,eyherabide2016}.

As we show in \reffig{generalizereducecodes}{B}, $ \diq{Q}{D} $, $ \diq{Q}{LS} $, $ \diq{Q}{B} $, and $ \diaq{Q}{B} $ are undefined if the actual responses $ \VR $ are not contained within the surrogate responses $ \VRT $. The result for $ \diq{Q}{D} $ was already explained in the first paragraph. For the other measures, our observation stems from the fact that, for the missing responses, an optimal decoder simply does not know what output should be produced. Our observation does not apply to $ \diq{Q}{DL} $ due to our theorems~\ref{met::theo::dinidl1}~and~\ref{met::theo::dinidl2}, but can be easily seen from the definition of $ \diq{Q}{DL} $ as a communication information loss \citep{latham2005,oizumi2010,eyherabide2016}. Indeed, the fact that optimal decoders may fail to operate on $ \VR $ due to the missing responses in $ \VRT $ indicates that no information can be reliably transmitted, thereby yielding $ \diq{Q}{DL}{=}\infoes{\VR} $ \citep{latham2005,eyherabide2016}.

Finally, notice that decoding measures may yield different and sometimes contradictory results \citep{oizumi2009,oizumi2010,eyherabide2013,latham2013}. However, the contradictions can be partially resolved by noticing that they have been previously derived using different notions of information. The reader is referred to \cite{{eyherabide2016}} and references therein for further remarks on these measures and notions.

\section{Results and Discussion}

\subsection{Encoding and decoding from the observer perspective}

The role of first-spike latencies, spike counts, or other response aspects in encoding information --- the encoding problem --- and in brain computations --- the decoding problem --- has previously been assessed, for example, using methods based either on information theory or on decoders \citep{quiroga2009}. For decades, methods based on information theory have often been related to the encoding problem, whereas methods based on decoders have often been related to the decoding problem. However, here we argue that the nature of the employed method is not reliably related to its operational significance, at least from the observer perspective.

Information-based methods typically compare the information encoded in two representations of the neural response, usually called neural codes and here denoted $ \VR $ and $ \VRT $. The code $ \VR $ typically preserves all response aspects, whereas the code $ \VRT $ usually arises from some transformation of $ \VR $ that ignores some response aspects, or at least the additional information they carry. The ignored response aspects are usually deemed inessential for encoding information when the information in $ \VR $ and $ \VRT $ is approximately equal, and important otherwise. However, the encoded information  need not provide conclusive insight about the encoding mechanisms, but about the read out mechanisms.

To illustrate this, consider the hypothetical experiment shown in \reffig{effectofbinning}{A}. There, the responses of a single neuron elicited by two visual stimuli have been characterized by counting the number of spikes within consecutive time-bins starting from the stimulus onset.  This characterization preserves all the encoded information regardless of whether the bin size is $ 5 $, $ 10 $, or $ 15\,ms $, thereby shedding limited insight into the actual spike-time precision employed by the encoding mechanism, which is $ 10\,ms $. However, this result certainly proves that such knowledge is inessential for ideal observers, who can read the spike trains without loss using any of the aforementioned bin sizes.

\begin{figure}[h!]
\renewcommand{\baselinestretch}{1.0}
\begin{center}
\includegraphics[]{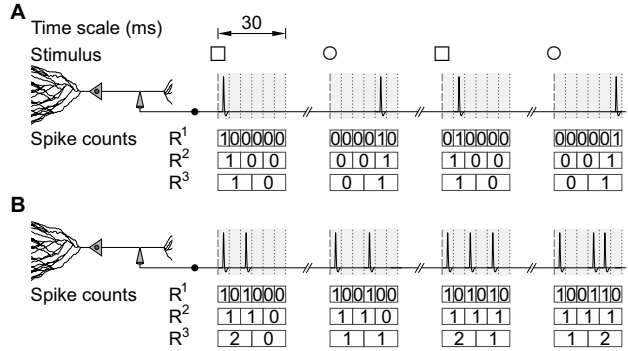}
\end{center}
\caption[]{\textbf{Lossless codes need not disentangle the encoding mechanisms.}\label{effectofbinning} A: Hypothetical intracellular recording of the precise spike patterns elicited by a single neuron after presenting in alternation two visual stimuli, namely $ \protect\SBox $ and $ \protect\SCirc $. Stimulus-response probabilities are arbitrary, and time is discretized in bins of $ 5\,ms $. The responses are recorded within $ 30\,ms $ time-windows after stimulus onset. Each type of frame elicits single spikes fired with different latencies, uniformly distributed between $ 0 $ and $ 10\,ms $ after the onset of $ \protect\SBox $, and between $ 20 $ and $ 30\,ms $ after the onset of $ \protect\SCirc $. Responses have been characterized by counting the number of spikes within consecutive time-bins of size $ 5 $, $ 10 $ and $ 15\,ms $ starting from the stimulus onset, thereby yielding a discrete-time sequences here denoted $ \VR^1 $, $ \VR^2 $ and $ \VR^3 $, respectively. B: Analogous description to panel A, but with each type of frame producing two different types of response patterns composed of $ 2 $ or $ 3 $ spikes. 
}
\end{figure}

Decoder-based methods usually compare the decoded informations or the decoding accuracies of two decoders: one constructed using $ \VR $ and another one constructed using $ \VRT $. The ignored response aspects are usually deemed inessential for decoding information when the information extracted by these two decoders is approximately equal, and important otherwise. However, the information extracted by optimal decoders trained and tested with surrogate responses $ \VRNI $ generated assuming that neurons are noise independent \citep{quiroga2009} need not provide insight into the necessity for taking into account noise correlations when constructing optimal decoders \citep{nirenberg2003,eyherabide2016}.

Our observations motivate us to reformulate the coding problem from the perspective of an ideal observer or organism (\refsfig{encodingdecoding}{A}). Within this perspective, we devise the encoding problem as concerning the losses caused when the ideal observers either can only see and are only allowed to see not the actual neural responses ($ \VR $), but a transformed version of them ($ \VRT $) that preserves a limited number of response aspects (\refsfig{encodingdecoding}{C}). The decoding problem is here interpreted analogously to the decoding perspective of the role of noise correlations, namely, as concerning the losses caused when observers make decisions assuming that they see $ \VRT $ instead of $ \VR $ (\refsfig{encodingdecoding}{D}). Mathematically, both problems can be tackled by training ideal observers not with $ \VR $ but with $ \VRT $ (\refsfig{encodingdecoding}{B}). However, within the encoding problem, the ideal observer operates on $ \VRT $, whereas within the decoding problem, the ideal observer operates on $ \VR $.

\begin{figure}[h!]
\renewcommand{\baselinestretch}{1.0}
\hfill
\begin{center}
\includegraphics[]{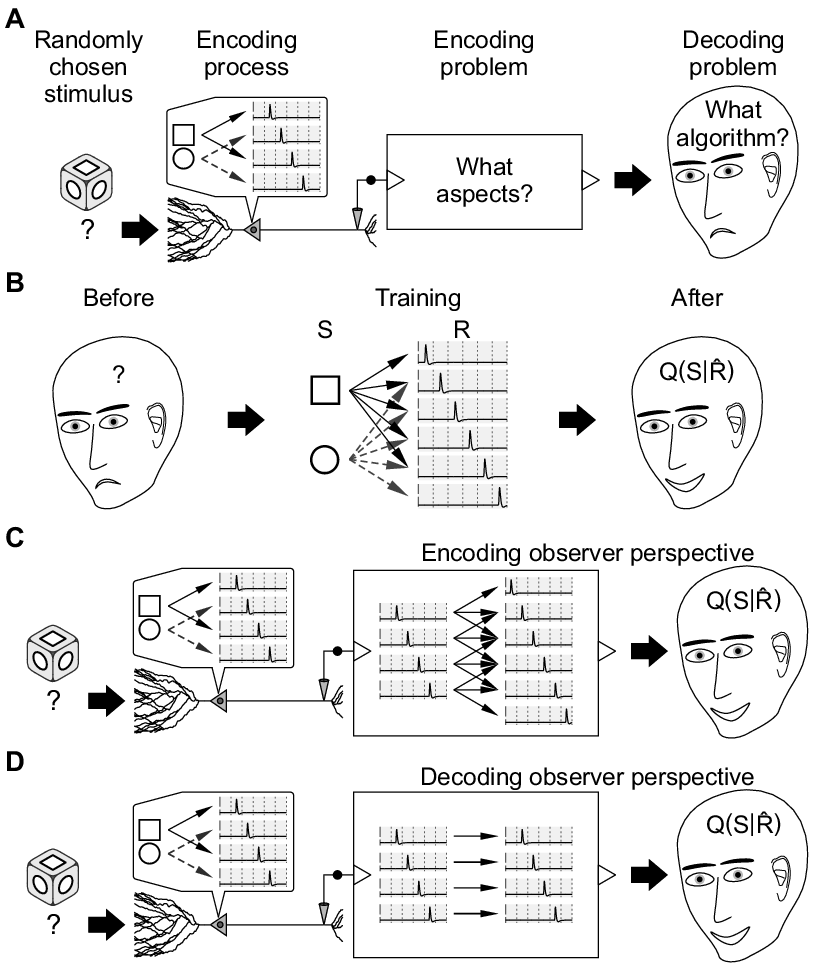}
\end{center}
\caption{\textbf{Encoding and decoding problems from the organism perspective.}\label{encodingdecoding}  A: The encoding problem is here reinterpreted as concerning the losses caused by transforming the recorded neural responses $ \VR $ into another responses $ \VRT $ before supplying them to ideal observers \citep{nelken2007}; whereas the decoding problem, as concerning the losses caused by training ideal observers with transformed responses $ \VRT $ or approximate response likelihoods, as opposed to the actual responses that the observer sees \citep{nirenberg2003,latham2005,quiroga2009,latham2013}. B: Both problems can be tackled by training ideal observers with the transformed responses $ \VRT $. C: However, in the encoding problem, the above observer is provided with the neural responses after the transformation, namely $ \VRT $. D: On the contrary, in the decoding problem, the above observer is provided with the original neural responses unchanged, namely $ \VR $. }
\end{figure}

\subsection{Relation between encoding and decoding}

These conceptual and operational differences notwithstanding, decoding-oriented measures of loss have previously been regarded as generalizing their encoding-oriented counterparts \citep{nirenberg2001,nirenberg2003,quiroga2009,latham2013}. Unfortunately, these claims ought to be observed with caution for at least three reasons: they need not be accurate, potentially confuse encoding and decoding, and are limited to specific types of codes. In this section, we will explain and justify these reasons.

Concerning the accuracy of the claims, \cite{nirenberg2003} claimed that their version $ \diq{Q}{D} $ for studying noise correlations is always positive and reduces to the difference in encoded information between two neural codes: one using large time-bins and another one using small time-bins. However, this difference can be positive or negative, regardless of the order in which the informations are subtracted, thereby contradicting the putative positivity of both $ \diq{Q}{D} $ and the measure they proposed. Indeed, in \refsfig{effectofbinning}{B}, using time-bins of $ 5$ or  $15\,ms $ preserves all the encoded information, but using intermediate time-bins of $ 10\,ms $ preserves none, thereby proving our statement.

Our finding seemingly contradicts previous experimental studies \citep{reinagel2000,quiroga2013} and, most importantly, the data processing theorem \citep{cover2006}. Indeed, it shows that previous experimental findings of encoded information monotonically decreasing with bin size  are not completely attributable to time discretization. However, it does not violate the data processing theorem because the code with $ 15\,ms $ time-bins cannot be derived from that with $ 10\,ms $ time-bins through stimulus-independent transformations.

Actually, the study of \cite{nirenberg2003} proposed a generalization of $ \diq{Q}{D} $, as opposed to a reduction of it, but unfortunately it cannot be interpreted in the same way. Indeed, their generalization compares the additional costs of using two different codes for constructing two decoders, respectively, that are then employed to decode a third code. Thus, this generalization need not be suitable for addressing the decoding problem, which concerns the losses caused by decoding a code assuming that it is another one, unless it is first reduced to $ \diq{Q}{D} $.

The confusion between encoding and decoding arises from the particular choice of probabilities that previous studies \citep{nirenberg2003,quiroga2009,latham2013} have employed when computing $ \diq{Q}{D} $ for stimulus-independent transformations of the original codes representing the neural responses, but we defer the proofs until the last section of Results. Most importantly, the above generalization and computations of $ \diq{Q}{D} $, as conducted in previous studies \citep{nirenberg2003,quiroga2009,latham2013}, together with their criticism \citep{schneidman2003}, are actually limited to stimulus-independent deterministic transformations, often called reduced codes, as opposed stochastic ones, here called stochastic codes. In the next section we characterize this codes, and in the one immediately after, we will assess what role they play in encoding and decoding from the observer perspective.

\subsection{Stochastic codes generalize reduced codes}

Stochastic codes arise naturally when studying the importance of specific response aspects \citep{eyherabide2010FCN}. However, stochastic codes are most valuable when studying other response aspects such as spike-time precision \citep{kayser2010,quiroga2009}, response discrimination \citep{victor1996,victor2005,rusu2014}, noise along neural pathways \citep{nelken2008}, and even noise correlations. In these cases, the use of reduced codes may be questionable, if not impossible. In this section, we will first illustrate these claims before giving a formal definition of stochastic codes.

Consider the hypothetical experiment in \reffig{generalizereducecodes}{A}, in which the neural responses $ \VR{=}[L,C] $ can be completely characterized by the first-spike latencies ($ L $) and the spike counts ($ C $). The importance of $ C $ can be studied not only using a reduced code that replaces all its values with a constant (\refsfig{generalizereducecodes}{B}), but also a stochastic code that shuffles the values of $ C $ across all responses with the same $ L $ (\refsfig{generalizereducecodes}{C}), or even a stochastic code that preserves the original value of $ L $ but chooses the value of $ C $ from some possibly $L-$dependent arbitrary probability distribution (\refsfig{generalizereducecodes}{D}).

\begin{figure}[h!]
\renewcommand{\baselinestretch}{1.0}
\hfill
\begin{center}
\includegraphics[]{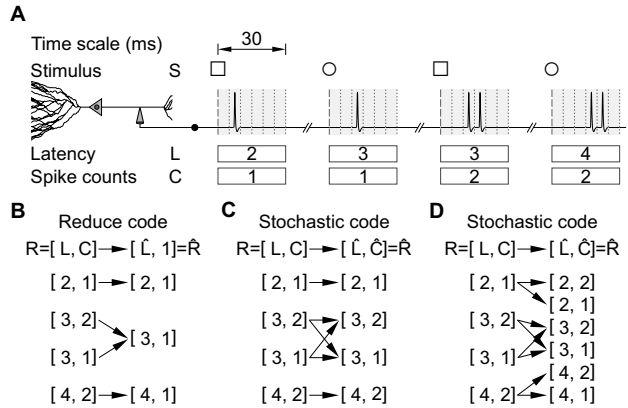}
\end{center}
\caption[]{\textbf{Stochastic codes include and generalize reduced codes.}\label{generalizereducecodes} A: Analogous description to \reffig{effectofbinning}{A}, but with responses characterized using a code $ \VR{=}[L,C] $ that preserves the first-spike latency ($ L $) and the spike-count ($ C $). B: Deterministic transformation (arrows) of $ \VR $ in panel A into a reduced code $ \VRT{=}[L,1] $, which ignores the additional information carried in $ C $ by considering it constant and equal to unity. This reduced code can also be reinterpreted as a stochastic code with transition probabilities $ Q(\VRT|\VR) $ defined by \refeq{app::eq::Qfig2B}. C: The additional information carried in $ C $ is here ignored by shuffling the values of $ C $ across all trails with the same $ L $, thereby turning $ \VR$ in panel A into a stochastic code $ \VRT{=}[\hat{L},\hat{C}] $ with transition probabilities $ Q(\VRT|\VR) $ defined by \refeq{app::eq::Qfig2C}. D: The additional information carried in $ C $ is here ignored by replacing the actual value of $ C $ for one chosen with some possibly $ L $-dependent probability distribution (\refeq{app::eq::Qfig2D}).}
\end{figure}

Using either reduced or stochastic codes in the previous example seemingly yields the same result (but see next section). However, this need not always be the case. For example, spike-time precision has previously been studied using at least three different methods, here called binning, bin-shuffling and jittering. Binning is perhaps the most common of the three methods and consists in transforming the recorded spike trains into discrete-time sequences (\reffig{effectofbinning}{}), thereby yielding a reduced code that can be interpreted as a stochastic code (\reffig{generalizereducecodes}{}). Bin-shuffling consists in randomly permuting spikes and silences within consecutive non-overlapping windows (\refsfig{naturalstochasticcodes}{A}), which inherently constitutes a stochastic code.

\begin{figure}[h!]
\renewcommand{\baselinestretch}{1.0}
\begin{center}
\includegraphics[]{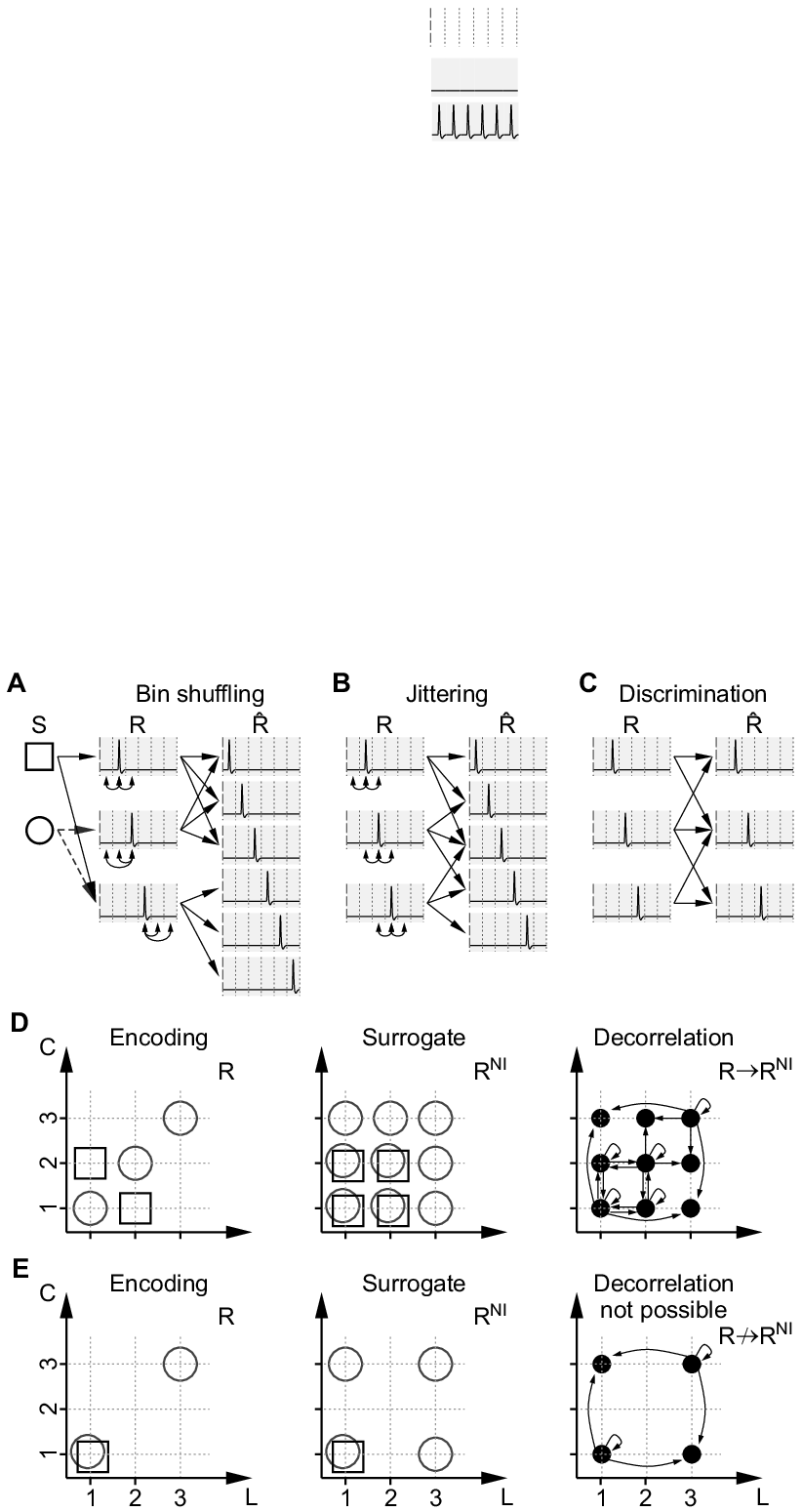}
\end{center}
\caption[]{\textbf{Examples of stochastic codes.} Hypothetical experiments analogous to that described in \reffig{generalizereducecodes}{A}.  
A: Stochastic function (arrows on the left) modeling the encoding process, followed by another one with transition probabilities $ Q(\VRT|\VR) $ given by \refeq{app::eq::Qfig7A}. This function turns $ \VR $ into a stochastic code $ \VRT$ by shuffling spikes and silences within consecutive bins of $ 15\,ms $ starting from stimulus onset. Notice that $ \VRT $ is analogous to $ \VR^3 $ in \refsfig{effectofbinning}{} (see Results). 
B: Responses $ \VR $ in panel A, followed by a stochastic function with $ Q(\VRT|\VR) $ given by \refeq{app::eq::Qfig7B}, which introduces jitter uniformly distributed within $ 15\,ms $ windows centered at each spike. 
C: Responses $ \VR $ in panel A, followed by a stochastic function with $ Q(\VRT|\VR) $ given by \refeq{app::eq::Qfig7C}, which models the inability to distinguish responses with spikes occurring in adjacent bins, or equivalently, with distances $D^{spike}[q{=}1]{\leq}1$ or $D^{interval}[q{=}1]{\leq}1$ (see \cite{victor1996} and \cite{victor2005} for further remarks on these distances). Notice that $ \VRT $ contains the same responses as $ \VR $.
D: Cartesian coordinates depicting: on the left, responses $ \VR $ of a neuron different from that in panel A, with $ L $  and $ C $ are positively correlated when elicited by $ \protect\SCirc $, and negatively correlated when elicited by $ \protect\SBox $; in the middle, the surrogate responses $ \protect\VRNI $ that would occur should $ L $ and $ C $ be noise independent (middle); and on the right, a stimulus-independent stochastic function that turns $ \VR $ into $ \VRNI $ with $ Q(\VRNI|\VR) $ given by \refeq{app::eq::Qfig8}.
E: Same description as in panel D, but with $ L $  and $ C $ noise independent given $ \SBox $, and with the stochastic function depicted on the right turning $ \VR $ into $ \VRNI $ given $ \SCirc $ but not $ \SBox $.}\label{naturalstochasticcodes}
\end{figure}

These two methods are equivalent in the sense that the responses generated by each method can be related through stimulus-independent stochastic functions to those generated by the other, and therefore contain the same amount and type of information. However, both methods suffer from the same drawback: they both treat spikes differently depending on their location within the time window. Indeed, both methods preserve the distinction between two spikes located at different time bins, but not within the same time bin, even if the separation between the spikes is the same.

The third method, jittering, consists in shuffling the recorded spikes within time windows centered at each spike (\refsfig{naturalstochasticcodes}{B}). Notice that the responses generated by this method need not be related to the responses generated by the other two methods through stimulus-independent stochastic functions, nor vice versa.  However, this method inherently yields a stochastic code, and, unlike the previous methods, treats all spikes in the same manner.

As another example, consider the effect of response discrimination, as studied in the seminal work of \cite{victor1996}. There, two responses were considered indistinguishable when some measure of distance between the responses was less than some predefined value. However, the distances were there used for transforming neural responses through a method based on cross-validation that, as we note here, is not guaranteed to be stimulus-independent as stochastic codes. Other methods exist which simply merge indistinguishable responses thereby yielding a reduced code, but these methods are limited to distances that are transitive.

Here we overcome these limitations by devising a method based on stochastic codes, which mimics the operation of an observer that confuses responses by randomly treating them as if it was another response. This method includes those based on reduced codes, and even those based on possibly non-symmetrical and non-metric measures of response similarity. Unlike the previous methods, this one can effectively be applied to the example shown in \refsfig{naturalstochasticcodes}{C}.

Formally, Stochastic codes are here defined as neural codes $ \VRT $ that can be obtained through stimulus-independent stochastic functions of other codes $ \VR $. After observing that $ \VR $ adopted the value $ \Vr $, these functions produce a single value $ \VrT $ for $ \VRT $ chosen with transition probabilities $ Q(\VrT|\Vr) $ such that 

\begin{equation}\label{met::eq::stochasticcondition}
P(\VrT|\Vs) = \sum_{\Vr}{P(\Vr|\Vs)\,Q(\VrT|\Vr)}\, . 
\end{equation}

\noindent When for each $ \Vr $ exists a $ \VrT $ such that $ Q(\VrT|\Vr){=}1 $, stochastic codes become reduced codes (\refsfig{generalizereducecodes}{}).

The problem of finding a feasible $ Q(\VrT|\Vr) $ can be readily solved using linear programming \citep{boyd2004}, which we can simplify by noticing the following two properties

\begin{itemize}
\item[1)] $ Q(\VrT|\Vr){=}0 $ if $ P(\Vs,\Vr){>}0 $ and $ P(\Vs,\VrT){=}0 $, for otherwise it would contradict the definition of stochastic codes; and 
\item[2)] A feasible $ Q(\VrT|\Vr) $ never exists if $ \infoes{\VRT}{>}\infoes{\VR} $, for otherwise it would contradict the data processing theorem.
\end{itemize}    

\noindent For example, in \reffig{naturalstochasticcodes}{D}, we can decorrelate first-spike latencies ($ L $) and spike counts ($ C $) by modeling $ \VRNI $ as a stochastic code derived from $ \VR $. To that end, we can solve \refeq{met::eq::stochasticcondition} for $ Q(\VRT|\VR) $ but with $ \VRT $ replaced by $ \VRNI $. In principle, the condition $ \infoes{\VR}{>}\infoes{\VRNI} $ always hold, and thus a feasible $ Q(\VRNI|\VR) $ may exist (property 2). Such probability must be zero at least whenever $\VRNI{\in}\{[1,3];[2,3];[3,3];[3,2];[3,1]\} $ and $ \VR{\in}\{[1,2];[2,1]\} $ (property 1). One possible solution is given by \refeq{app::eq::Qfig9}.

However, notice that stochastic codes need not always exist. For example, in \reffig{naturalstochasticcodes}{E}, the condition $ \infoes{\VR}{<}\infoes{\VRNI} $ always hold, and therefore no stochastic code can map $ \VR $ into $ \VRNI $. Notice that \cite{schneidman2003} employed an analogous example, but involving different neurons instead of response aspects, only for comparing $ \diq{}{D} $ with $ \dienc{\VR}{\VRT} $,  and ignoring whether $ \VRNI $ constitutes as a stochastic code. Finally, stochastic codes must not be confused with probabilistic population codes \citep{knill2004}, stimulus-dependent noise, decoding algorithms based on cross-validation \citep{victor1996,quiroga2009}, and list decoding \citep{schneidman2003,ince2010,eyherabide2013}.

\subsection{The role of stochastic codes}

Like reduced codes, stochastic codes are here defined as stimulus-independent transformations of another code. Accordingly, both types of codes preserve only information contained in the original code that limits the decoded information when decoders operate on them. Although stochastic codes may include more responses than the original codes, this should arguably raise no concerns \citep{nirenberg2003,schneidman2003} because the transformations that define them are stimulus independent and can be implemented during decoding. Thus, for stochastic codes, the encoding and the decoding problems may be related as they are for reduced codes, even when arising from ignoring noise correlations (\refsfig{naturalstochasticcodes}{D}).

We tested this hypothesis by comparing three encoding-oriented measures of information loss ($ \dienc{\VR}{\VRT} $, $ \dienc{\VR}{\VSTL} $, $ \dienc{\VR}{\VST} $) and one of accuracy loss ($ \diae{\VR}{\VRT} $), with four decoding-oriented measures of information loss ($ \diq{}{D} $, $ \diq{}{DL} $, $ \diq{}{LS} $, $ \diq{Q}{B}$) and one of accuracy loss ($ \diaq{}{B} $). Their definitions, rationale, and classification are given in Methods. We found that, for the stochastic codes in Figures~\ref{generalizereducecodes}--\ref{circularshift}, the encoding-oriented measures were greater or less than the decoding-oriented measures depending on the case and the probabilities (Table~\ref{table1}). Consequently, our results refute the above hypothesis and, most importantly, prove that previous controversies about encoding and decoding transcend the study of noise correlations.

\begin{figure}[h!]
\renewcommand{\baselinestretch}{1.0}
\begin{center}
\includegraphics[]{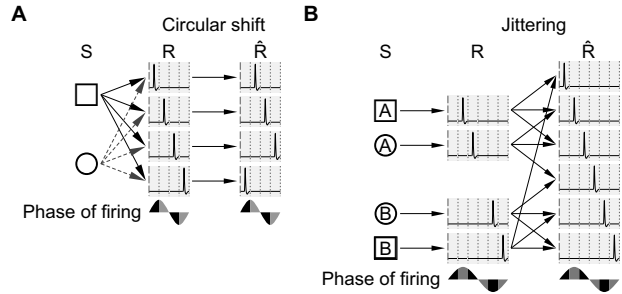}
\end{center}
\caption[]{\textbf{Stochastic codes may play different roles in encoding and decoding.} A: Hypothetical experiment with two visual stimuli, namely $ \protect\SBox $ and $ \protect\SCirc $. The stimuli are transformed (left arrows) into neural responses containing a single spike ($ C{=}1 $) fired at different phases ($ \varPhi $) with respect to a cycle of $ 20\,ms $ period starting at stimulus onset. The phases have been discretized in intervals of size $\pi/2 $ and wrapped to the interval $ [0,2 \pi) $; the resulting value is also known as principal value. The encoding process is followed by a circular phase-shift that transforms $ \VR{=}\varPhi $ into another code $ \VRT{=}\hat{\varPhi} $ with transition probabilities $ Q(\VRT|\VR) $ defined by \refeq{app::eq::Qfig6A}. Notice that $ \VRT $ contains all and only responses present in $ \VR $. B: Analogous description to panel A, except that stimuli are four ($ \SABox $, $ \SACirc $, $ \SBBox $, and $ \SBCirc $), whereas phases are measured with respect to a cycle of $ 30	\,ms $ period and discretized in intervals of size $\pi/3 $. The encoding process is followed by a stochastic transformation (right arrows) that introduces jitter, thereby transforming $ \VR{=}\varPhi $ into another code $ \VRT{=}\hat{\varPhi} $ with transition probabilities $ Q(\VRT|\VR) $ defined by \refeq{app::eq::Qfig6B}. }\label{circularshift}
\end{figure}

\begin{table}[h!]
\renewcommand{\baselinestretch}{1.0}
\renewcommand{\arraystretch}{1.0}
\caption{\textbf{The encoding and decoding problems are unrelated for response aspects beyond noise correlations.} Maximum and minimum differences between encoding-oriented measures of information and accuracy losses and their decoding-oriented counterparts. For each example, the values were computed through the function \texttt{fminsearch} of Matlab 2016, with random initial values for the stimulus-response probabilities and the transition matrices. The computation was repeated until 20 consecutive runs failed to improve the estimate. The values are expressed as percentages of $ \infoes{\VR} $ (the information encoded in $ \VR $) or $ \Ad{\VR}{\VR} $ (the maximum accuracy above chance level when decoders operate on $ \VR $). Notice that $ \dienc{\VR}{\VSTL}{=}\dienc{\VR}{\VST} $ and $ \diq{Q}{LS}{=}\diq{Q}{B} $ because all the cases comprise only two stimuli, and that the absolute value of $\diae{\VR}{\VRT}{-} \diaq{Q}{B} $ can become extremely large when $ \Ad{\VR}{\VR}{\approx}0 $.}\label{table1} 
\begin{center}
\begin{tabular}{@{}|l|r@{\hspace{.5ex}}||r@{\hspace{.5ex}}||@{\hspace{.5ex}}r@{\hspace{.5ex}}|@{\hspace{.5ex}}r@{\hspace{.5ex}}||@{\hspace{.5ex}}r@{\hspace{.5ex}}|@{\hspace{.5ex}}r@{\hspace{.5ex}}||@{\hspace{.5ex}}r@{\hspace{.5ex}}|@{\hspace{.5ex}}r@{\hspace{.5ex}}|@{}}
\hline
\multicolumn{2}{|l||}{Cases} & \refsfig{naturalstochasticcodes}{D} & \refsfig{naturalstochasticcodes}{B} & \refsfig{naturalstochasticcodes}{C} & \refsfig{generalizereducecodes}{D} & \refsfig{naturalstochasticcodes}{A} & \refsfig{generalizereducecodes}{B} & \refsfig{circularshift}{A} \\\hline
\multirow{2}{*}{$\dienc{\VR}{\VRT}{-}\diq{Q}{D}$}  & min & -79& -51& -40 & 0 & 0 & --- & -1193\\ \cline{2-9}
													& max &  26&  32& 46 & 0 & 0 & --- & -171\\ \hline
\multirow{2}{*}{$\dienc{\VR}{\VRT}{-}\diq{Q}{DL}$} & min & -34& -32& -18 & 0 & 0 & -100 & -100\\ \cline{2-9}
													& max &  59&  41& 77 & 0 & 0 & 0 & -100\\ \hline
\multirow{2}{*}{$\dienc{\VR}{\VRT}{-}\diq{Q}{B}$}  & min & -67& -62& -57 & -63 & -87 & --- & -100\\ \cline{2-9}
													& max &  57&  81& 89 & 0 & 0 & --- & -4\\ \hline
\multirow{2}{*}{$\dienc{\VR}{\VST}{-}\diq{Q}{D}$} 	& min & -79& -48& -38 & 0 & 0 & --- & -1192\\ \cline{2-9}
													& max &  74&  92& 92 & 63 & 87 & --- & -71\\ \hline
\multirow{2}{*}{$\dienc{\VR}{\VST}{-}\diq{Q}{DL}$} & min & -34& -27& -18 & 0 & 0 & -100 & -89\\ \cline{2-9}
													& max &  91&  92& 92 & 63 & 87 & 0 & 83\\ \hline
\multirow{2}{*}{$\dienc{\VR}{\VST}{-}\diq{Q}{B}$}  & min & -51& -31& -18& 0 & 0 & --- & -95\\ \cline{2-9}
													& max &  59&  91& 93 & 0 & 0 & --- & 79\\ \hline
\multirow{2}{*}{$\diae{\VR}{\VRT}{-}\diaq{Q}{B}$}   & min &-384&-200& -167 & 0 & 0 & --- & $ \infty $\\ \cline{2-9}
													& max &  95&  67& 100 & 0 & 0 & --- & 0\\ \hline
\end{tabular}
\end{center}
\end{table}

Our conclusions may seem puzzling because the data processing theorems ensure that neither $ \dienc{\VR}{\VRT} $ can exceed the information loss, nor $ \diae{\VR}{\VRT} $ the accuracy loss, caused when decoders operate on $ \VRT $. However, the data processing theorems are not violated because encoding-oriented measures are related to decoders that operate on $ \VRT $, whereas decoding-oriented measures are related to decoders that operate directly on $ \VR $ (\refsfig{encodingdecoding}{D}).

To gain additional insight, we divided the stochastic codes in four groups. In the first group (\refsfig{naturalstochasticcodes}{D}), noise correlations can be ignored through stochastic codes, but the results remain unchanged. Consequently, contrary to previously thought \citep{nirenberg2003}, the discrepancy between encoding- and decoding-oriented measures cannot be completely attributed to surrogate responses generated through stimulus-dependent transformations, or containing responses with information not present in the original neural responses. However, we can prove the following theorem

\begin{theorem}\label{res::theo::nidependence} The transition probabilities $ Q(\VRNI|\VR) $ of stochastic codes that ignore noise correlations may depend both on the marginal likelihoods, and on the noise correlations.
\end{theorem}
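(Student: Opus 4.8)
The statement asserts a possibility (``may''), so the plan is to produce a concrete family of source codes for which \emph{every} feasible decorrelating transition matrix is a non-trivial function of both the marginal likelihoods and the correlation structure of $\VR$. I would build this around the two-aspect, two-stimulus example of \refsfig{naturalstochasticcodes}{D}, in which $\VR{=}[L,C]$ records a first-spike latency $L$ and a spike count $C$. Parametrize the source by its conditional marginals $P(L{=}l|\Vs)$, $P(C{=}c|\Vs)$ and by the excess joint probabilities $\gamma_{l,c}^{\Vs}{=}P(L{=}l,C{=}c|\Vs)-P(l|\Vs)P(c|\Vs)$, which encode the noise correlations; the target then has conditional law $\PNIs(\VrNI|\Vs){=}P(l^{NI}|\Vs)\,P(c^{NI}|\Vs)$, a function of the marginals only. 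Substituting $\VRT{\to}\VRNI$ into \refeq{met::eq::stochasticcondition} turns the decorrelation requirement into a linear system in the unknowns $Q(\VrNI|\Vr){\geq}0$, with row sums one, that must hold simultaneously at $\Vs{=}\SBox$ and $\Vs{=}\SCirc$ while $Q$ is held stimulus independent.

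The steps I would carry out are: (i) choose the $\gamma^{\Vs}$ large enough that some corners of the joint support vanish while the corresponding products of marginals do not, so that the first of the two properties stated after \refeq{met::eq::stochasticcondition} forces the zero pattern already recorded in the text for this figure, namely $Q(\VrNI|\Vr){=}0$ whenever $\VRNI{\in}\{[1,3],[2,3],[3,3],[3,2],[3,1]\}$ and $\VR{\in}\{[1,2],[2,1]\}$; (ii) after discarding those entries and the normalization redundancy, verify that the coefficient matrix of the remaining linear system has full column rank, so that the feasible set collapses to a single point $Q_0$ (one such point being the solution \refeq{app::eq::Qfig9}); (iii) exhibit $Q_0$ explicitly as a rational expression in the marginals $P(l|\Vs),P(c|\Vs)$ and the excesses $\gamma_{l,c}^{\Vs}$; and (iv) differentiate. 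Perturbing a single marginal, say $P(L{=}1|\SBox)$, while staying in the same support pattern and keeping $\infoes{\VRNI}{<}\infoes{\VR}$ (so that the second property still guarantees feasibility), one checks that at least one entry of $Q_0$ changes, i.e.\ $\partial Q_0 / \partial P(L{=}1|\SBox){\neq}0$; hence no stimulus-independent $Q$ can be reused across that perturbation, so $Q$ must depend on the marginal likelihoods. Perturbing instead a single excess $\gamma_{l,c}^{\SCirc}$ with all marginals fixed, one checks $\partial Q_0/\partial \gamma_{l,c}^{\SCirc}{\neq}0$, so $Q$ must also depend on the noise correlations. Taking both facts together proves the theorem.

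The main obstacle is that stochastic codes are generally not unique, so ``the'' transition probabilities are not \emph{a priori} a well-defined object and the phrase ``depends on'' must be made precise. I would deal with this in step (ii) by engineering the cardinalities and supports so that the feasible polytope degenerates to a point; the bookkeeping there --- counting equations against the free entries left after imposing the forced zeros, and certifying that the resulting matrix has full rank --- is where most of the work sits. If securing uniqueness proves awkward, a slightly weaker but more robust substitute is to exhibit two sources with identical marginals but different $\gamma^{\Vs}$, and separately two with identical $\gamma^{\Vs}$ but different marginals, whose feasible polytopes are \emph{disjoint}, so that every valid $Q$ is forced to change; disjointness is obtained by pushing one source's correlation past the threshold at which a joint corner leaves the support, which alters the forced zero pattern via the first property. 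Throughout, one must keep each perturbed instance feasible, which is ensured by taking the perturbations small and using that $\infoes{\VRNI}$ is strictly below $\infoes{\VR}$ in the base example, as already noted for \refsfig{naturalstochasticcodes}{D}.
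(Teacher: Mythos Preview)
Your plan is sound in outline, but it diverges from the paper's proof in two notable ways, and the main risk you flag is real.

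First, the paper splits the two claims across \emph{two} examples. Dependence on the marginals is shown with \refsfig{naturalstochasticcodes}{D}, but dependence on the noise correlations is shown with the separate construction of \reffig{correlationdependence}{}. You attempt both with the single example of \refsfig{naturalstochasticcodes}{D}; that may well work, but the paper evidently found it cleaner to change the geometry for the second claim.

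Second, and more importantly, the paper avoids your step~(ii) entirely. For the marginal part it does not argue that the feasible $Q$ is unique and then differentiate; instead it \emph{inverts} the constraint, solving \refeq{met::eq::stochasticcondition} to express the marginal $P(L{=}1|\SBox)$ as an explicit function of two entries of $Q$. Since the map $Q\mapsto P(L{=}1|\SBox)$ is well defined on the whole feasible set, any change in that marginal forces some entry of every feasible $Q$ to move, with no uniqueness needed. For the correlation part the paper argues by contradiction: it posits a $Q$ that works for \emph{all} correlation structures with fixed marginals, derives the parametric form such a $Q$ must take, and then shows that requiring this $Q$ to send $\VRNI$ to itself forces a degenerate distribution. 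Hence no correlation-independent $Q$ exists.

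Your uniqueness route is the weak point: the paper's own wording (``one possible solution is given by \refeq{app::eq::Qfig8}'') strongly suggests the feasible polytope for \refsfig{naturalstochasticcodes}{D} is not a singleton, so the full-rank check in step~(ii) will likely fail and you will be forced onto your fallback. That fallback (disjoint feasible polytopes) is correct and is, in effect, a set-valued version of what the paper does; but the paper's inversion and contradiction arguments get there with substantially less bookkeeping than tracking entire polytopes under perturbation. If you want to stay close to a single example, consider adopting the inversion trick for the marginal claim and reserving the polytope argument for the correlation claim.
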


\begin{proof}
We can prove the dependency on the marginal likelihoods by computing $ Q(\VRNI|\VR) $ for the hypothetical experiment of \reffig{naturalstochasticcodes}{D}. After some straightforward but tedious algebra, \refeq{met::eq::stochasticcondition} yields, for the $ \VRNI $ associated with $ \SBox $, the following equation must hold
\begin{equation}
P(L{=}1|\VS{=}\SBox) = 0.5\,\left[\delta q + \left(\delta q^2+4\,Q([1,2]|[2,1])\right)^{0.5}\right]\, ,
\end{equation}

\noindent where $ \delta q{=}Q([1,2]|[1,2])-Q([1,2]|[2,1]) $. Hence, any change in $ P(L{=}1|\VS{=}\SBox) $ must be followed by some change in $ Q(\VRNI|\VR) $, thereby proving that the latter cannot be independent of the former.

We can prove the dependency on the noise correlations by computing $ Q(\VRNI|\VR) $ for the hypothetical experiment of \reffig{correlationdependence}{}. After some straightforward but tedious algebra analogous to that in \reffig{naturalstochasticcodes}{D}, \refeq{met::eq::stochasticcondition} yields that, for the responses associated with $ \SBox $, $ Q(\VRNI|\VR) $  must have the following form
\begin{equation}\arraycolsep=1ex
Q(\VRNI|\VR)=\,\left[\begin{array}{cccc}
x_1+x_4 & x_2+x_5 & x_3+x_6 & 1-\sum_{i{=}1}^6{x_i}\\
x_1 & x_2& x_3& 1-\sum_{i{=}1}^3{x_i}\\
x_4 & x_5 & x_6 & 1-\sum_{i{=}4}^6{x_i}\\
0 & 0 & 0 & 1
\end{array}\right]\, ,\end{equation}
or else it will depend on the noise correlations. Here, $ \VR{=}[l,c] $ is associated with the row of index $ i{=}2\,l{-}2{+}c $; whereas $ \protect\VRNI{=}[\hat{l},\hat{c}] $, with the column of index $ j{=}2\,\hat{l}{-}2{+}\hat{c} $; and $ x_1,\ldots,x_6 $ are probabilities. To resolve for $ x_1,\ldots,x_6 $, consider the set of all response distributions with the same marginals as $ \VRNI $ that can be turned into $ \VRNI $ through stimulus-independent stochastic functions regardless of their noise correlations. This set includes $ \VRNI $, and therefore, $ Q(\VRNI|\VR) $ should be able to transform $ \VRNI $ into itself. However, it turns out that this is only possible when $ P(\VRNI|\SBox) $ is unity for $ \VRNI{=}[2,2] $ and zero otherwise, thereby proving that $ Q(\VRNI|\VR) $ depends on the noise correlations in $ \VR $, even if the marginal probabilities are fixed.
\end{proof}

\begin{figure}[h!]
\renewcommand{\baselinestretch}{1.0}
\begin{center}
\includegraphics[]{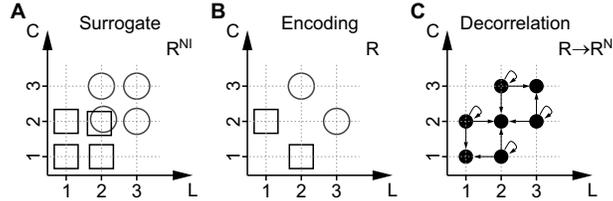}
\end{center}
\caption[]{\textbf{Stochastic codes that ignore noise correlations may depend on them.}\label{correlationdependence} A: Cartesian coordinates representing a hypothetical experiment in which two different stimuli, namely $ \SBox $ and $ \SCirc $, elicit single neuron responses ($ \VRNI $) that completely characterized by their first-spike latency ($ L $) and spike counts ($ C $). Both $ L $  and $ C $ are noise independent. B: Cartesian coordinates representing a hypothetical experiment with the same marginal probabilities $ P(l|\Vs) $ and $ P(c|\Vs) $ as in A: depicting one among many possible types of noise correlations between $ L $ and $ C $. C: Stimulus-independent stochastic function transforming the noise-correlated responses $ \VR $ of \textbf{b} into the noise-independent responses $ \VRNI $ of \textbf{a}. The corresponding transition probability $ Q(\VRNI|\VR) $ is given by \refeq{app::eq::Qfig9}. Even when $ P(l|\Vs) $ and $ P(c|\Vs) $ are fixed, the transition probabilities may depend on the noise correlations, as proved in Results.}
\end{figure}

We tested if the discrepancy between encoding- and decoding-oriented measures stems from the aforementioned dependencies by comparing the values they attain within the second group of stochastic codes (\refsfig{naturalstochasticcodes}{ B,C}). For these codes, which arise naturally when studying spike-time precision or response discrimination, we found that the dependencies never occur (namely, $ Q(\VRT|\VR) $ is independent on $ P(\VS,\VR) $) but the results remain unchanged. Therefore, we conclude that the aforementioned discrepancy cannot be attributed to these dependencies, and can arise not only when studying noise correlations, but when studying other response aspects even if defined through constant and stimulus-response-independent transformations.

Further inspection shows that the discrepancy occurs even when stochastic codes only contain responses present in the original codes (\refsfig{naturalstochasticcodes}{C}), and both in the amount and in the type of information (\refsfig{circularshift}{B}). To prove the latter, consider that an ideal observer is trained using the noisy data $ \VRT $ shown in \reffig{differentinformationtype}{A}, but it is asked to operate on the quality data $ \VR $ shown in \reffig{differentinformationtype}{B}. The information losses $\dienc{\VR}{\VRT}$, $\diq{Q}{D}$, and $\diq{Q}{DL}$ produced by this ideal observer are all equal to $\infoes{\VR}/2 $ regardless of whether $ \VR $ is transformed into $ \VRT $ before showing them to the ideal observer (the encoding problem) or not (the decoding problem).

\begin{figure}[h!]
\renewcommand{\baselinestretch}{1.0}
\begin{center}
\includegraphics[]{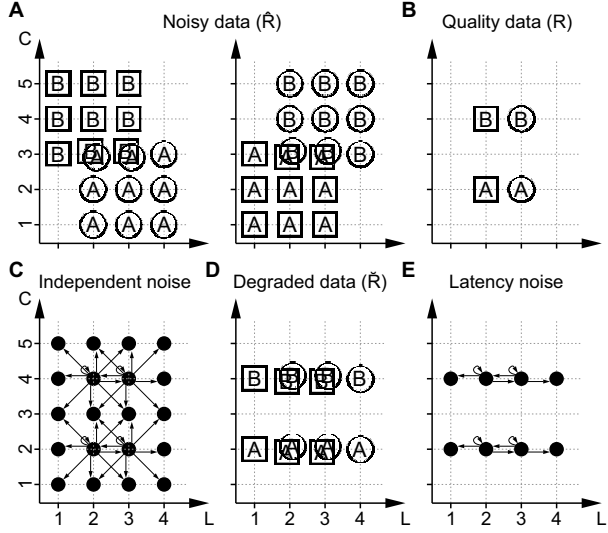}
\end{center}
\caption[]{\textbf{Improving mismatched decoding by adding noise.}\label{differentinformationtype} A: Noisy data recorded in a hypothetical experiment with four compound stimuli $ \VS{=}[\VS_F,\VS_L] $, generated by choosing independently a frame ($ \VS_F{=}\SBox $ or $ \SCirc$) and a letter ($ \VS_L{=}\mathrm{A}$ or $\mathrm{B}$), thereby yielding $ \SABox $, $ \SACirc $, $ \SBBox $, and $ \SBCirc $. Stimuli are transformed (left arrows) into neural responses $ \VRT{=}[\hat{L},\hat{C}] $ with different number of spikes ($ 1{\leq}C{\leq}4 $) fired at different first-spike latencies ($1{\leq}L{\leq}5$; time has been discretized in $ 5\,ms $ bins). Latencies are only sensitive to frames whereas spikes counts are only sensitive to letters, thereby constituting independent-information streams \citep[namely $ P(\Vs,\Vr){=}P(\Vs_F,l)\,P(\Vs_L,c) $; ][]{eyherabide2016}. B: Quality data ($ \VR{=}[L,C] $) recorded as in panel A, but without noise. C: Stimulus-independent stochastic transformation with transition probabilities $ Q(\VRT|\VR) $ given by \refeq{app::eq::Qfig10C}, that introduces independent noise both in the latencies and in the spike counts, thereby transforming $ \VR $ into $ \VRT $ and rendering $ \VRT $ as a stochastic code. D: Degraded data obtained by adding latency noise to the quality data. E: Representation of the stimulus-independent stochastic transformation with transition probabilities $ Q(\VRB|\VR) $ given by \refeq{app::eq::Qfig10D} that adds latency noise in panel D.}
\end{figure}

However, the aforementioned transformation causes some responses to occur for all stimuli, thereby preserving only some information about frames ($ I(\VS_F,\VRT){=}I(\VS_F,\VR)/3 $) and letters ($ I(\VS_L,\VRT){=}2\,I(\VS_L,\VR)/3 $). In other words, within the encoding problem, $ \VRT $ causes a partial information loss $\dienc{\VR}{\VRT}$ that is composed of both frame and letter information. On the contrary, within the decoding problem, $ \VRT $ causes no information loss about letters but a total information loss about frames. We can prove the former by noticing that, for the responses that actually occur in $ \VR $, the ideal observer trained with $ \VRT $ can perfectly identify the letters (because $ P(\hat{C}{=}2|\VS_L{=}\mathrm{A}){=}P(\hat{C}{=}4|\VS_L{=}\mathrm{B}){=}1$). We can prove the latter by noticing that $ P(\hat{l}|\SBox){=}P(\hat{l}|\SCirc)$ whenever $ \hat{l} $ adopts a value that actually occurs in $ \VR $, namely $ 2 $ or $ 3 $. Analogous computations yield analogous results for the hypothetical experiment shown in \reffig{circularshift}{B}.

These results contrast with those found in studies of noise correlations in two ways. First, those studies employed surrogate responses generated through stimulus-dependent transformations and often contain responses with information not present in the original neural responses. Second, for those surrogate responses, we can prove that $ \diq{Q}{DL} $ cannot reach $ 100\,\% $ unless $ \dienc{\VR}{\VRT} $ does as well. Specifically, we can prove the following theorem

\begin{theorem}
When ignoring noise correlations, $ \diq{Q}{DL}{=}\infoes{\VR}$ if and only if $ \dienc{\VR}{\VRNI}{=}\infoes{\VR} $, regardless of whether stochastic codes exist that map the actual responses $ \VR $ into the surrogate responses $ \VRNI $ generated assuming noise independence.
\end{theorem}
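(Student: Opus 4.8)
\noindent The plan is to recast the loss in \refeq{met::eq::diqdl} as $\diq{Q}{DL}=\infoes{\VR}+\min_\theta g(\theta)$ for an explicit function $g$, and --- since \refeq{met::eq::dier} makes the condition $\dienc{\VR}{\VRNI}{=}\infoes{\VR}$ the same as $\infoes{\VRNI}{=}0$ --- to reduce the claim to the equivalence ``$\min_\theta g(\theta){=}0$ iff $\infoes{\VRNI}{=}0$''. Recall that, when one ignores noise correlations, $P(\VRNI{=}\Vr|\VS{=}\Vs){=}\PNIs(\Vr|\Vs){=}\prod_n P(r_n|\Vs)$; the argument below uses only this likelihood, never whether $\VRNI$ is a stochastic code of $\VR$. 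A preliminary step tames the piecewise definition of $P(\Vs|\VrT,\theta)$ with $\VrT$ drawn from $\VRNI$: since $P(\VR{=}\VrB|\VS{=}\Vs){>}0$ forces $P(r_n|\Vs){\geq}P(\VR{=}\VrB|\VS{=}\Vs){>}0$ for each component $n$ of $\VrB$, hence $\PNIs(\VrB|\Vs){>}0$, no response of $\VR$ can be impossible under $\VRNI$ at the same stimulus; so the first branch of the definition never fires and the condition $P(\VrT|\Vs){=}P(\VR{=}\VrT|\VS{=}\Vs){=}0$ collapses to $\PNIs(\VrT|\Vs){=}0$. On the support of $P(\Vs,\Vr)$ this gives $P(\VS{=}\Vs|\VRNI{=}\Vr,\theta){=}P(\Vs)\,\PNIs(\Vr|\Vs)^{\theta}/Z(\Vr,\theta)$, where $Z(\Vr,\theta){=}\sum_{\Vs':\,\PNIs(\Vr|\Vs')>0}P(\Vs')\,\PNIs(\Vr|\Vs')^{\theta}$, and splitting the summand of \refeq{met::eq::diqdl} as $\ln\!\frac{P(\Vs|\Vr)}{P(\Vs)}+\ln\!\frac{P(\Vs)}{P(\VS{=}\Vs|\VRNI{=}\Vr,\theta)}$ identifies $g(\theta){=}\sum_\Vr P(\Vr)\ln Z(\Vr,\theta)-\theta\sum_{\Vs,\Vr}P(\Vs,\Vr)\ln\PNIs(\Vr|\Vs)$. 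In particular $g(0){\leq}0$, which also re-proves the known bound $\diq{Q}{DL}{\leq}\infoes{\VR}$.

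For the ``if'' direction, $\infoes{\VRNI}{=}0$ makes $\VS$ and $\VRNI$ independent, so $\PNIs(\Vr|\Vs){=}P(\VRNI{=}\Vr)$ at every stimulus of positive probability; then $Z(\Vr,\theta){=}P(\VRNI{=}\Vr)^{\theta}$, the surrogate posterior collapses to $P(\VS{=}\Vs|\VRNI{=}\Vr,\theta){=}P(\Vs)$, whence $g\equiv0$ and $\diq{Q}{DL}{=}\infoes{\VR}$. This uses nothing beyond the definitions.

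The substantive direction is ``only if'', which I would prove contrapositively: assuming $\infoes{\VRNI}{>}0$, show $\min_\theta g(\theta){<}0$. Since $Z(\Vr,0){=}\sum_{\Vs':\,\PNIs(\Vr|\Vs')>0}P(\Vs'){\leq}1$, every term of $g(0)$ is $\leq0$, leaving two cases. If $g(0){<}0$ we are done at $\theta{=}0$ --- this is the degenerate regime in which some surrogate response is impossible under some stimulus while the data is not, matching the reading of $\diq{Q}{DL}$ as a communication loss. Otherwise $g(0){=}0$, so $\PNIs(\Vr|\Vs){>}0$ throughout the support, $Z(\Vr,\cdot)$ is positive and analytic, and $g$ is differentiable at $0$ with $-g'(0){=}\sum_{\Vs,\Vr}P(\Vs,\Vr)\ln\PNIs(\Vr|\Vs)-\sum_\Vr P(\Vr)\sum_{\Vs'}P(\Vs')\ln\PNIs(\Vr|\Vs')$. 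The key step is to expand $\ln\PNIs(\Vr|\Vs){=}\sum_n\ln P(r_n|\Vs)$ and marginalise out all components but the $n$-th: the $n$-th summand is then precisely the $\theta{=}0$ slope one would obtain for a \emph{matched} decoder on the sub-channel $\VS\to R_n$, and a one-line application of Jensen's inequality bounds it below by $I(\VS;R_n)$. Hence $-g'(0){\geq}\sum_n I(\VS;R_n)$, and because $\VRNI$ is conditionally independent given $\VS$ with the same component marginals as $\VR$, subadditivity of entropy yields $\infoes{\VRNI}{=}H(\VRNI)-\sum_n H(R_n|\VS){\leq}\sum_n I(\VS;R_n)$. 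Therefore $-g'(0){\geq}\infoes{\VRNI}{>}0$, so $g$ dips below $0$ for small $\theta{>}0$ and $\diq{Q}{DL}{<}\infoes{\VR}$.

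The main obstacle, I expect, is the bookkeeping that licenses the reduction to $g$: one must check that replacing $\VrT$ by a response of $\VRNI$ always keeps the actual responses inside the surrogate support --- this is what renders the first branch of the definition of $P(\Vs|\VrT,\theta)$ vacuous, validates the closed form of $g$, and (when $g(0){=}0$) makes $Z(\Vr,\cdot)$ positive and analytic so that $g'(0)$ exists --- while keeping the two regimes $g(0){<}0$ and $g(0){=}0$ cleanly separated from the ``if'' direction. Once that is in place, the derivative computation and the two information inequalities, $-g'(0){\geq}\sum_n I(\VS;R_n)$ and $\sum_n I(\VS;R_n){\geq}\infoes{\VRNI}$, are routine.
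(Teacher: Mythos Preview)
Your proposal is correct and, at heart, uses the same lever as the paper --- the behaviour of the objective in \refeq{met::eq::diqdl} near $\theta{=}0$ --- but your execution is genuinely more self-contained and more careful with edge cases. The paper argues directly: if $\diq{Q}{DL}{=}\infoes{\VR}$ then the minimum is attained at $\theta{=}0$, and invokes equation (B13a) of \cite{latham2005} to read off, for each component $n$, the identity $\sum_{\Vs,r_n}P(\Vs,r_n)\log P(r_n|\Vs){=}\sum_{\Vs,r_n}P(\Vs)P(r_n)\log P(r_n|\Vs)$, from which $\infoes{R_n}{=}0$ follows via nonnegativity of Kullback--Leibler divergence. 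You instead run the contrapositive: you package the objective as $\infoes{\VR}{+}g(\theta)$, dispose of the degenerate regime $g(0){<}0$ separately, and when $g(0){=}0$ you bound $-g'(0)\geq\sum_n I(\VS;R_n)\geq\infoes{\VRNI}$ by Jensen and subadditivity, forcing $g$ strictly below zero near $\theta{=}0$. What your route buys is independence from the external reference and an explicit treatment of the piecewise definition of $P(\Vs|\VrT,\theta)$ --- in particular, your observation that $P(\VR{=}\Vr|\Vs){>}0$ forces $\PNIs(\Vr|\Vs){>}0$ is exactly the support check the paper glosses with ``the range of $\VRNI$ includes that of $\VR$''. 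What the paper's route buys is brevity, at the cost of delegating the per-component separation to \cite{latham2005}. The ``if'' direction is handled identically in both.
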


\begin{proof}
Consider a neural code $ \VR{=}[R_1,\ldots,R_N] $ and recall that the range of $ \VRNI $ includes that of $ \VR $. Therefore, $ \diq{Q}{DL}{=}\infoes{\VR}$ implies that the minimum in \refeq{met::eq::diqdl} is attained when $ \theta{=}0 $. In that case, equation (B13a) in \cite{latham2005} yields the following $ \sum_{\Vs,r_n}{P(\Vs,r_n)\,\log P(r_n|\Vs)}{=}\sum_{\Vs,r_n}{P(\Vs)\,P(r_n)\,\log P(r_n|\Vs)}$ for all $ 1{\leq}n{\leq}N $ for all $ n $. After some more algebra and recalling that the Kullback-Leibler divergence is never negative, this equation becomes $ \infoes{R_n}{=}0 $. Consequently $ \dienc{\VR}{\VRNI}{=}\infoes{\VR} $, thereby proving the ''only if'' part. For the ''if'' part, it is sufficient to notice that the last equality implies that $ \PNI(\Vr|\Vs){=}\PNI(\Vr) $.
\end{proof}

\noindent Consequently, our results cannot be inferred from the results obtained in studies of noise correlations.

In the third group (\refsfig{generalizereducecodes}{D}~and~\refsfig{naturalstochasticcodes}{A}), $ \dienc{\VR}{\VRT}{=}\diq{}{D}{=}\diq{}{DL}$, $\dienc{\VR}{\VSTL}{=}\diq{}{LS} $, $\dienc{\VR}{\VST}{=}\diq{}{B} $, and $ \diae{\VR}{\VRT}{=}\diaq{}{B} $. We can prove that these relations arise whenever the mapping from $ \VR $ into $ \VRT $ can be described using positive-diagonal idempotent stochastic matrices \citep{hognas2011}. Specifically, we can prove the following theorem

\begin{theorem}\label{met::theo::encdecrel} Consider stimulus-independent stochastic functions $ f $ from a neural code $ \VR $ into another code $ \VRT $ which range $ \mathcal{R} $ includes that of $ \VR $, and which transition probabilities $ P(\VrT|\Vr) $ can be written as positive-diagonal idempotent right stochastic matrices with row and column indexes equally enumerating $ \mathcal{R} $. Then, $ \dienc{\VR}{\VRT}{=}\dienc{\VR}{\VRB}{=}\diq{Q}{D}{=}\diq{Q}{DL}$, $ \dienc{\VR}{\VSTL}{=}\diq{Q}{LS} $, $ \dienc{\VR}{\VST}{=}\diq{Q}{B} $, and $ \diae{\VR}{\VRT}{=}\diaq{Q}{B} $, whenever reduced codes $ \VRB $ can be devised as lossless deterministic representations of $ \VRT $.
\end{theorem}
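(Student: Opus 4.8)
The idea is that the idempotency hypothesis forces the transition matrix to be block diagonal with rank-one blocks, so that $f$ is nothing but an ordinary reduced code followed by a randomisation \emph{within} each merged class --- a randomisation completely invisible to an optimal decoder. All the claimed identities then follow from the elementary ones valid for reduced codes. The only genuinely non-routine ingredient is establishing that block structure; everything downstream is bookkeeping.

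\textbf{Step 1: structure of $Q$.} Write $\Vr\rightsquigarrow\VrT$ for $P(\VrT|\Vr)>0$. Nonnegativity together with $P(\Vr|\Vr)>0$ and $Q^2{=}Q$ make $\rightsquigarrow$ a preorder on $\mathcal R$ whose classes partition $\mathcal R$; within a class all pairs are $\rightsquigarrow$-related, so every diagonal block of $Q$ has strictly positive entries, and (using $Q^2{=}Q$ and the fact that a state cannot point into a strictly lower class) each such block is itself idempotent. A Perron--Frobenius argument then excludes ``transient'' classes: a class whose common support set exceeded itself would give an irreducible idempotent substochastic block, whose Perron root would have to be $1$ and $<1$ at once. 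Hence every block is stochastic, and an irreducible idempotent stochastic matrix has rank one, i.e.\ the block equals $\mathbf 1(\pi^{(k)})^{\!\top}$ for a strictly positive law $\pi^{(k)}$ on the class $\mathcal C_k$. This is the classification of idempotent stochastic matrices \citep{hognas2011} specialised to positive diagonals (cf.\ \refsfig{generalizereducecodes}{D}, \refsfig{circularshift}{A}). Thus $\mathcal R$ splits into classes $\mathcal C_1,\dots,\mathcal C_m$ and $f$ replaces any $\Vr\in\mathcal C_k$ by a $\pi^{(k)}$-random element of the \emph{same} $\mathcal C_k$; in particular $f$ never changes the class of its input, and classes containing no response that actually occurs may be discarded.

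\textbf{Step 2: the bridging reduced code.} Let $\VRB$ record the index $k$ of the class of $\VR$. Since $f$ preserves classes, $k$ is equally the class index of $\VRT$, so $\VRB$ is simultaneously a stimulus-independent deterministic function of $\VR$ (a reduced code) and a deterministic function of $\VRT$; this is the $\VRB$ whose existence the theorem postulates. Step 1 gives $P(\VRT{=}\VrT|\Vs){=}\pi^{(k)}(\VrT)\,P(\VRB{=}k|\Vs)$ for $\VrT\in\mathcal C_k$, so $\VS$ and $\VRT$ are conditionally independent given $\VRB$; hence $\VRB$ is a \emph{lossless} reduction of $\VRT$, $\infoes{\VRB}{=}\infoes{\VRT}$, and therefore $\dienc{\VR}{\VRT}{=}\dienc{\VR}{\VRB}$ (and likewise for any lossless reduction of $\VRT$). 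The same display yields the pointwise fact on which the rest rests: for every actually occurring response $\Vr$, $P(\VS{=}\Vs|\VRT{=}\Vr){=}P(\VS{=}\Vs|\VRB{=}k(\Vr))$, so a decoder trained on $\VRT$ and fed an occurring response reacts only to its class, exactly as a decoder trained on $\VRB$.

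\textbf{Steps 3--4: collapsing onto $\VRB$ and finishing.} Substituting $P(\VS{=}\Vs|\VRT{=}\Vr)$ by $P(\Vs|k(\Vr))$ in \refeq{met::eq::diqd} turns $\diq{Q}{D}$ into $\sum_{\Vs,\Vr}P(\Vs,\Vr)\ln\frac{P(\Vs|\Vr)}{P(\Vs|k(\Vr))}$, which by the chain rule equals $\infoes{\VR}{-}\infoes{\VRB}{=}\dienc{\VR}{\VRT}$. In \refeq{met::eq::diqdl} the tempered posterior reduces, via $P(\Vs|\VrT,\theta)\propto P(\Vs)P(\VrT|\Vs)^{\theta}$ and $P(\VrT|\Vs){=}\pi^{(k)}(\VrT)P(\VRB{=}k|\Vs)$, to $P(\Vs|k(\VrT),\theta)$ --- the degenerate branches of its definition never fire, since $P(\VRT{=}\VrB|\Vs){=}0$ forces $P(\VR{=}\VrB|\Vs){=}0$ for $\VrB$ in the same class --- so $\diq{Q}{DL}$ becomes $\min_\theta\big[\dienc{\VR}{\VRB}+\sum_k P(\VRB{=}k)\,\DKL\big(P(\VS|\VRB{=}k)\|P(\VS|\VRB{=}k,\theta)\big)\big]$, which is $\dienc{\VR}{\VRB}$ (attained at $\theta{=}1$). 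The sorted list and most-likely stimulus read off $P(\VS{=}\Vs|\VRT{=}\Vr)$ depend on $\Vr$ only through $k(\Vr)$ and coincide with the optimal list and estimate for $\VRB$, so $\diq{Q}{LS}{=}\dienc{\VR}{\VSTL}$ and $\diq{Q}{B}{=}\dienc{\VR}{\VST}$ (both equal $\infoes{\VR}$ minus the information in the $\VRB$-decoder's output). Finally, since $k(\VR){=}k(\VRT)$ almost surely, the decoder trained on $\VRT$ outputs the identical estimate whether fed $\VR$ or $\VRT$, giving $\Ad{\VR}{\VRT}{=}\Ad{\VRT}{\VRT}$ and hence $\diaq{Q}{B}{=}\diae{\VR}{\VRT}$. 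Chaining these with Step 2 proves the theorem; as anticipated, the sole obstacle is Step 1.
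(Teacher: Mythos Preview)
Your proof is correct and follows the same route as the paper's: both extract from the positive-diagonal idempotent hypothesis a partition of $\mathcal{R}$ into classes $\mathcal{R}_1,\ldots,\mathcal{R}_M$ on which the transition kernel is constant (rank one), then observe that $P(\Vs|\VrT)$ and $P(\Vs|\VrT,\theta)$ depend on $\VrT$ only through its class, so every decoding-oriented quantity collapses to its encoding-oriented counterpart for the class-index reduced code. You supply considerably more detail than the paper does---a Perron--Frobenius argument for the block structure (which the paper simply asserts, leaning on \citealt{hognas2011}), an explicit construction of $\VRB$ showing the ``whenever'' clause is automatically satisfied, and a line-by-line check of each identity including the degenerate branches in the definition of $P(\Vs|\VrT,\theta)$---whereas the paper states the partition and says the result ``follows immediately.''
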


\begin{proof}
The last condition assumes that a deterministic function $ g $ maps $ \VRT $ into $ \VRB $ such that $ \infoes{\VRT}{=}\infoes{\VRB} $. Hence, $ P(\Vs|\VrT){=}P(\Vs|\VrB) $ when $ \VrB{=}g(\VrT) $, thereby carrying both $ \VRT $ and $ \VRB $ the same amount and type of information. However, the intention of this condition is more profound: It requires that, even when speaking of $ \VRB $, the decoding-oriented measures be applied to the stochastic code $ \VRT $. This avoids the problems found in \reffig{generalizereducecodes}{B}, because the range $\mathcal{R} $ of $ \VRT $ includes that of $ \VR $. Futhermore, the restrictions on $ f $ imply that  $ \mathcal{R} $ can be partitioned into non-overlapping sets $ \mathcal{R}_1,\ldots,\mathcal{R}_M $, each of which is mapped by $ f $ onto itself, and most importantly, that $ P(\VrT|\Vr){=}P(\VrT|\mathcal{R}_m) $ when $ \Vr{\in}\mathcal{R}_m $. Hence, for $ \VrT{\in}\mathcal{R}_m$, $ P(\VrT|\Vs){=}P(\VrT|\mathcal{R}_m)\,P(\mathcal{R}_m|\Vs) $, thereby yielding $ P(\Vs|\VrT){=}P(\Vs|\mathcal{R}_m) $ and $ P(\Vs|\VrT,\theta){=}P(\Vs|\mathcal{R}_m,\theta) $. Our result follow immediately after recalling that $ P(\Vs|\VrB){=}P(\Vs|\VrT) $. 
\end{proof}

This theorem is unnecessary for the quantitative equality between encoding- and decoding-oriented measures, but ensures that the equalities hold not only in amount but also in type. For example, in \refsfig{circularshift}{B}, we found that $ \dienc{\VR}{\VRT}{=}\dienc{\VR}{\VSTL}{=}\dienc{\VR}{\VST}{=}\diq{Q}{D}{=}\diq{Q}{DL}{=}\diq{Q}{LS}{=}\diq{Q}{B}{=}50\,\% $ and $ \diae{\VR}{\VRT}{=}\diaq{Q}{B}{\approx}67\,\% $ even though theorem~\ref{met::theo::encdecrel} does not hold. However, the losses are not necessarily of the same type as we proved before. Most importantly, it formally justifies and clarifies the pervasive idea that, for reduced codes, the decoding-oriented measures reduce to their encoding-oriented counterparts. Indeed, the fourth group shows that, when not fulfilled, decoding-oriented measures can exceed their encoding-oriented counterparts (\refsfig{circularshift}{A}), or even be undefined (\refsfig{generalizereducecodes}{B}; see Methods). Consequently, contrary to previous studies \citep{nirenberg2003,quiroga2009,latham2013}, deterministic transformations need not ensure that the encoding and the decoding problems are related.

These results also reveal unexpected confounds in \cite{nirenberg2003,quiroga2009,latham2013} that invalidate their conclusions by turning otherwise decoding-oriented measures into encoding-oriented ones. To reveal and resolve the flaws in the aforementioned studies, recall the experiment in \reffig{circularshift}{A}. There, $ \VRT{=}f(\VR) $, with $ f $ being a deterministic bijective function. According to the above studies, $ \diq{Q}{D} $ should be computed through \refeq{met::eq::diqd} but with $ P(\VS{=}\Vs|\VRT{=}\Vr)$ replaced by $Q(\Vs|\Vr){\propto}P(\Vs)\,P(f(\Vr)|\Vs)$. Because $ f $ is bijective, $ Q(\Vs|\Vr){=}P(\Vs|\Vr){=}P(\VS{=}\Vs|\VRT{=}f(\Vr)) $. However, we can prove the following theorem

\begin{theorem}
When deterministic functions $ f $ exist such that $ \VRT{=}f(\VR) $, replacing $ P(\VS{=}\Vs|\VRT{=}\Vr)$ with $ P(\VS{=}\Vs|\VRT{=}f(\VR))$ in 
\refeq{met::eq::diqd} turns $ \diq{Q}{D} $ into an encoding-oriented measure.
\end{theorem}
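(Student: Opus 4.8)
The plan is to substitute the prescribed replacement directly into \refeq{met::eq::diqd} and recognise the outcome as $ \dienc{\VR}{\VRT}{=}\infoes{\VR}{-}\infoes{\VRT} $, which \refeq{met::eq::dier} defines --- and which the subsequent discussion explicitly labels --- as one of our encoding-oriented measures. Concretely, I would divide both numerator and denominator of the logarithm in \refeq{met::eq::diqd} by $ P(\Vs) $, so that the replaced $ \diq{Q}{D} $ splits into $ \sum_{\Vs,\Vr}P(\Vs,\Vr)\ln\frac{P(\Vs|\Vr)}{P(\Vs)} $ minus $ \sum_{\Vs,\Vr}P(\Vs,\Vr)\ln\frac{P(\VS{=}\Vs|\VRT{=}f(\Vr))}{P(\Vs)} $. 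The first sum is, by \refeq{met::eq::infodef}, exactly $ \infoes{\VR} $, so the whole claim reduces to identifying the second sum with $ \infoes{\VRT} $.

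That identification is the one genuinely substantive step. Since $ f $ is deterministic and stimulus independent, $ \VRT{=}f(\VR) $ and the preimages $ f^{-1}(\VrT) $ partition the range of $ \VR $; summing $ P(\Vs,\Vr) $ over one such preimage yields $ P(\VS{=}\Vs,\VRT{=}\VrT) $, while the summand $ \ln\frac{P(\VS{=}\Vs|\VRT{=}f(\Vr))}{P(\Vs)} $ is constant across that preimage. Hence $ \sum_{\Vs,\Vr}P(\Vs,\Vr)\ln\frac{P(\VS{=}\Vs|\VRT{=}f(\Vr))}{P(\Vs)}{=}\sum_{\Vs,\VrT}P(\VS{=}\Vs,\VRT{=}\VrT)\ln\frac{P(\VS{=}\Vs|\VRT{=}\VrT)}{P(\Vs)}{=}\infoes{\VRT} $. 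Combining the two pieces, the replaced $ \diq{Q}{D} $ equals $ \infoes{\VR}{-}\infoes{\VRT}{=}\dienc{\VR}{\VRT} $, and since $ \dienc{\VR}{\VRT} $ is one of the encoding-oriented measures of \refeqs{met::eq::dier}{met::eq::dies}, this is the assertion.

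Two points will need care, though neither is deep. First, the conventions $ 0\ln 0{=}0 $ and $ x\ln 0{=}\infty $ must not interfere: they do not here, because $ f(\Vr) $ always lies in the range of $ \VRT{=}f(\VR) $, so $ P(\VS{=}\Vs|\VRT{=}f(\Vr)) $ can vanish only when $ P(\Vs,\Vr){=}0 $ as well (any $ \Vr $ with $ P(\Vs,\Vr){>}0 $ already contributes to the preimage sum defining $ P(\VS{=}\Vs,\VRT{=}f(\Vr)) $); this is also precisely why the replacement remains well defined in the configuration of \refsfig{generalizereducecodes}{B}, where the unreplaced $ \diq{Q}{D} $ is not. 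Second, and this is the real content, ``encoding-oriented'' must be read in the precise operational sense of \refsfig{encodingdecoding}{C}: the replacement silently passes $ \VR $ through $ f $ before the decoder ever acts, so the decoder effectively only ever sees $ f(\VR){=}\VRT $ and never exploits the finer structure of $ \VR $, whence what was meant to quantify a mismatched-decoding loss on $ \VR $ collapses to the pure encoding loss between $ \VR $ and $ \VRT $. Making that interpretive collapse explicit --- rather than grinding through the algebra --- is the step that carries the weight, and it is exactly what undermines the conclusions drawn from $ \diq{Q}{D} $ in \cite{nirenberg2003,quiroga2009,latham2013}.
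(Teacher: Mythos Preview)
Your argument is correct, and it actually goes further than the paper's own proof. The paper establishes the theorem purely at the interpretive level: following the question-asking-strategy reading of $\diq{Q}{D}$ \citep{nirenberg2001,nirenberg2003,latham2005,latham2013}, the replacement means the mismatched strategy is applied not to $\VR$ but to $f(\VR){=}\VRT$, so by the operational criterion of \refsfig{encodingdecoding}{C} the measure is encoding-oriented. That is essentially the content of your final paragraph. What you add is the explicit algebraic identification: by partitioning the sum over $\Vr$ into preimages $f^{-1}(\VrT)$ you show the replaced $\diq{Q}{D}$ is literally equal to $\dienc{\VR}{\VRT}$, not merely encoding-oriented in spirit. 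This buys you a sharper statement --- the measure collapses to a specific entry in the list \refeqs{met::eq::dier}--(\ref{met::eq::dies}) --- and it makes transparent why the pathology of \refsfig{generalizereducecodes}{B} disappears under the replacement, a point the paper's proof leaves implicit. The paper's route, in turn, has the virtue of working directly from the operational definition without any computation, and of making clear that the conclusion does not hinge on $\diq{Q}{D}$ reducing to any one particular encoding measure.
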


\begin{proof}
Following the reasoning of previous studies \citep{nirenberg2001,nirenberg2003,latham2005,latham2013}, \refeq{met::eq::diqd} with the above replacement is analogous to comparing two optimal question-asking strategies induced by $ P(\Vs|\Vr) $ and $ P(\Vs|\VrT) $, respectively. The former is applied directly on $ \VR $, but the latter is not. Instead, the latter is applied on $ f(\VR){=}\VRT $. Consequently, the comparison addresses the effect of transforming $ \VR $ into $ \VRT $ before feeding them into the decoding process, thereby turning $ \diq{Q}{D} $ into an encoding-oriented measure.
\end{proof}

\noindent Hence, our theorem proves that the computations in the aforementioned studies turn $ \diq{Q}{D} $ into an encoding-oriented measure, and that their conclusions confuse encoding and decoding. Consequently, caution must be exercise when computing the necessary probabilities for all decoding-oriented measures, including but not limited to $ \diq{Q}{D} $.

In practice, our results open up the possibility of increasing the efficiency of decoders constructed with approximate descriptions of the neural responses, usually called approximate or mismatched decoders, by adding suitable amounts and types of noise to the decoder input. To see this, recall the example of \reffig{differentinformationtype}{} in which a decoder constructed with noisy data ($ \VRT $; \reffig{differentinformationtype}{A}) was employed to decode quality data ($ \VR $; \reffig{differentinformationtype}{B}), thereby causing the information losses $\diq{Q}{D}{=}\diq{Q}{DL}{=}\infoes{\VR}/2 $. These losses can be decreased by feeding the decoder with a degraded version $ \VRB $ of the quality data (\refsfig{differentinformationtype}{D}) generated through a stimulus-independent transformation that adds latency noise (\refsfig{differentinformationtype}{E}). Decoding $ \VR $ as if it was $ \VRT $ by first transforming $ \VR $ into $ \VRB $ results in $ \diq{Q}{D}{=}\diq{Q}{DL}{=}\infoes{\VR}/3 $, thereby recovering $ 33\,\% $ of the information previously lost. On the contrary, adding spike-count noise will tend to increase the losses. Thus, we have proved that adding suitable amounts and type of noise can increase the performance of approximate decoders even for response aspects beyond noise correlations. In addition, this result also indicates that, contrary to previously thought \citep{shamir2014}, decoding algorithms need not match the encoding mechanisms for performing optimally from an information-theoretical standpoint.

\section*{Conclusion}

Here we have reformulated the coding problem from the observer perspective as concerning two questions: what needs to be seen --- the encoding problem --- and how it must be interpreted --- the decoding problem. These two problems were here shown to provide limited insight into each other when studying spike-time precision, response discrimination, or other aspects of the neural response. Furthermore, we have shown that response aspects may play different roles in encoding and decoding even when defined through stimulus-independent transformations, and that decoding need not match encoding for response aspects beyond noise correlations. These findings constitute a major departure from traditional views on the role of response aspects in neural coding and brain computations. On the practical side, our most outstanding finding is that the decoded information need not be limited by the information carried in the data used to construct the decoder. This finding was here shown to open up new possibilities for increasing the performance of existing approximate decoders. Most importantly, it also indicates that decoders need not be retrained for operating optimally on data of higher quality than the one used in their construction, thereby potentially saving experimental and computational resources, and reducing the complexity and cost of neural prosthetics.

\subsection*{Acknowledgments}

This work was supported by the Ella and Georg Ehrnrooth Foundation.

\section*{Appendix}

\subsection*{Transition probabilities used in the figures}

In \reffig{generalizereducecodes}{ }, the transition probabilities $ Q(\VRT|\VR) $ are defined as follows

\begin{equation}\arraycolsep=1ex
Q(\VRT|\VR)= \left[\begin{array}{cccccc}
1 & 0 & 0 \\
0 & 1 & 0 \\
0 & 1 & 0 \\
0 & 0 & 1 
\end{array}\right]\, \mbox{ for \reffig{generalizereducecodes}{B};}\end{equation}\label{app::eq::Qfig2B}

\begin{equation}\arraycolsep=1ex
Q(\VRT|\VR)= \left[\begin{array}{cccccc}
1 &  0 & 0 & 0 \\
0 &  a & \bar{a} & 0 \\
0 &  a & \bar{a} & 0 \\
0 &  0 & 0 & 1 
\end{array}\right]\, \mbox{ for \reffig{generalizereducecodes}{C}; and}\end{equation}\label{app::eq::Qfig2C}

\begin{equation}\arraycolsep=1ex
Q(\VRT|\VR)= \left[\begin{array}{cccccc}
b & \bar{b} & 0 & 0 & 0 & 0\\
0 & 0 & C & \bar{c} & 0 & 0\\
0 & 0 & c & \bar{c} & 0 & 0\\
0 & 0 & 0 & 0 & d & \bar{d}
\end{array}\right]\, \mbox{ for \reffig{generalizereducecodes}{D}.} \end{equation}\label{app::eq::Qfig2D}

\noindent Here, we have used the following conventions: row and column indexes enumerate $ \VR$ and $ \VRT $, respectively, in lexicographical order; $ a{=}P(\VR{=}[3,1])/P(L{=}3) $; $ 0{<}b,c,d{<}1 $; and $ \bar{x}{=}1{-}x $ for any number $ x $. Notice that, the matrix in \refeq{app::eq::Qfig2C} is positive-diagonal idempotent right stochastic as in theorem~\ref{met::theo::encdecrel}, whereas the one in \refeq{app::eq::Qfig2D} can be made positive-diagonal idempotent by adding rows.

In \reffig{naturalstochasticcodes}{ }, the transition probabilities $ Q(\VRT|\VR) $ are defined using matrices in which row and column indexes enumerate $ \VR $ and $ \VRT $, respectively, according to their latencies in increasing order, thereby yielding 

\begin{equation}\arraycolsep=1ex
Q(\VRT|\VR)= \frac{1}{3} \left[\begin{array}{cccccc}
1 & 1 & 1 & 0 & 0 & 0\\
1 & 1 & 1 & 0 & 0 & 0\\
0 & 0 & 0 & 1 & 1 & 1
\end{array}\right]\, \mbox{ for \reffig{naturalstochasticcodes}{A};}\end{equation}\label{app::eq::Qfig7A}

\begin{equation}\arraycolsep=1ex
Q(\VRT|\VR)= \frac{1}{6} \left[\begin{array}{ccccc}
2 & 2 & 2 & 0 & 0   \\
0 & 2 & 2 & 2 & 0   \\
0 & 0 & 2 & 2 & 2   
\end{array}\right]\, \mbox{ for \reffig{naturalstochasticcodes}{B};}\end{equation}\label{app::eq::Qfig7B}

\begin{equation}\arraycolsep=1ex
Q(\VRT|\VR)= \frac{1}{6} \left[\begin{array}{ccc}
3   & 3   & 0   \\
2   & 2   & 2   \\
0   & 3   & 3   
\end{array}\right]\, \mbox{ for \reffig{naturalstochasticcodes}{C}; and}\end{equation}\label{app::eq::Qfig7C}

\begin{equation}\arraycolsep=1ex
Q(\VRNI|\VR)= \frac{1}{2} \,\left[\begin{array}{ccccccccc}
2 b     & \bar{b} c & \bar{c} \bar{b} & \bar{b} c & 0 & 0 & \bar{c} \bar{b} & 0 & 0\\
\bar{a} & 2 a & 0 & 0  & \bar{a} & 0 & 0 & 0 & 0\\
a & 0 & 0 & 2\bar{a} & a & 0 & 0 & 0 & 0\\
0 & b & 0 & b & 2\bar{b} c & \bar{c} \bar{b} & 0 & \bar{c} \bar{b} & 0\\
0 & 0 & b & 0 & 0 & \bar{b} c & b & \bar{b} c & 2\bar{c} \bar{b}\end{array}\right]\, \mbox{ for \reffig{naturalstochasticcodes}{D}.}\end{equation}\label{app::eq::Qfig8}

\noindent Here, $ a{=}P(\VR{=}[1,2]|\VS{=}\protect\SBox) $; $ b{=}P(\VR{=}[1,1]|\VS{=}\protect\SCirc)  $; and $ c{=}P(\VR{=}[2,2]|\VS{=}\protect\SCirc)/\bar{b} $.

In \reffig{circularshift}{ }, the transition probabilities $ Q(\VRT|\VR) $ are defined using matrices in which row and column indexes indicate to the values of $ \varPhi$ and $ \hat{\varPhi}$, respectively, thereby yielding 

\begin{equation}\arraycolsep=1ex
Q(\VRT|\VR)=  \left[\begin{array}{cccc}
0 & 1 & 0 & 0 \\
0 & 0 & 1 & 0 \\
0 & 0 & 0 & 1 \\
1 & 0 & 0 & 0 
\end{array}\right]\, \mbox{ for \reffig{circularshift}{A}; and }\end{equation}\label{app::eq::Qfig6A}

\begin{equation}\arraycolsep=1ex
Q(\VRT|\VR)=  \frac{1}{3}\,\left[\begin{array}{cccccc}
1 & 1 & 0 & 0 & 0 & 1 \\
1 & 1 & 1 & 0 & 0 & 0 \\
0 & 1 & 1 & 1 & 0 & 0 \\
0 & 0 & 1 & 1 & 1 & 0 \\
0 & 0 & 0 & 1 & 1 & 1 \\
1 & 0 & 0 & 0 & 1 & 1  
\end{array}\right]\, \mbox{ for \reffig{circularshift}{B}.}\end{equation}\label{app::eq::Qfig6B}

In \reffig{correlationdependence}{ }, the transition probabilities $ Q(\VRT|\VR) $ can be obtained by solving \refeq{met::eq::stochasticcondition} and is given by the following

\begin{equation}\arraycolsep=1ex
Q(\VRNI|\VR)= 0.5\,\left[\begin{array}{ccccccc}
\bar{a} & 2a & 0 & \bar{a} & 0  & 0 & 0\\
a & 0 &  2\bar{a} & a & 0  & 0 & 0\\
0 & 0 &  0 & b & 2\bar{b}  & 0 & b \\
0 & 0 &  0 & \bar{b}  & 0 & 2b & \bar{b}
\end{array}\right]\, ,\end{equation}\label{app::eq::Qfig9}

\noindent with row and column indexes enumerating $ \VR$ and $ \VRT $, respectively, in lexicographical order; $ a{=}P(\VR{=}[1,2]|\VS{=}\SBox) $; and $ b{=}P(\VR{=}[3,2]|\VS{=}\SCirc)  $.

In \reffig{differentinformationtype}{ }, the transition probabilities $ Q(\VRT|\VR) $ can be defined using matrices in which row and column indexes enumerate $ \VR $ and $ \VRT $, respectively, in lexicographical order, thereby yielding

\begin{equation}\arraycolsep=1ex
Q(\VRT|\VR)=  \frac{1}{9}\,\left[\begin{array}{*{20}{c@{\hspace{1ex}}}}
1 & 1 & 1 & 0 & 1 & 1 & 1 & 0 & 1 & 1 & 1 & 0 & 0 & 0 & 0 & 0 & 0 & 0 & 0 & 0 \\
0 & 1 & 1 & 1 & 0 & 1 & 1 & 1 & 0 & 1 & 1 & 1 & 0 & 0 & 0 & 0 & 0 & 0 & 0 & 0 \\
0 & 0 & 0 & 0 & 0 & 0 & 0 & 0 & 1 & 1 & 1 & 0 & 1 & 1 & 1 & 0 & 1 & 1 & 1 & 0 \\
0 & 0 & 0 & 0 & 0 & 0 & 0 & 0 & 0 & 1 & 1 & 1 & 0 & 1 & 1 & 1 & 0 & 1 & 1 & 1
\end{array}\right]\, \mbox{ for \reffig{differentinformationtype}{C}; and }\end{equation}\label{app::eq::Qfig10C}

\begin{equation}\arraycolsep=1ex
Q(\VRB|\VR)=  \frac{1}{3}\,\left[\begin{array}{*{20}{c@{\hspace{1ex}}}}
0 & 0 & 0 & 0 & 1 & 1 & 1 & 0 & 0 & 0 & 0 & 0 & 0 & 0 & 0 & 0 & 0 & 0 & 0 & 0 \\
0 & 0 & 0 & 0 & 1 & 1 & 1 & 0 & 0 & 0 & 0 & 0 & 0 & 0 & 0 & 0 & 0 & 0 & 0 & 0 \\
0 & 0 & 0 & 0 & 0 & 0 & 0 & 0 & 0 & 0 & 0 & 0 & 1 & 1 & 1 & 0 & 0 & 0 & 0 & 0 \\
0 & 0 & 0 & 0 & 0 & 0 & 0 & 0 & 0 & 0 & 0 & 0 & 1 & 1 & 1 & 0 & 0 & 0 & 0 & 0 
\end{array}\right]\, \mbox{ for \reffig{differentinformationtype}{E}.}\end{equation}\label{app::eq::Qfig10D}

\bibliographystyle{apa}

\end{document}